\documentclass[envcountsame,envcountsect, oribibl]{llncs}
\usepackage{amsmath,amssymb}
\usepackage{bbold, pgf, tikz, verbatim,stmaryrd,eurosym}
\usetikzlibrary{arrows,automata}
%\usepackage{moresize}
%\usepackage{times}
%\usepackage[T1]{fontenc}

%\usepackage[T1]{fontenc} \newcommand{\changefont}[3]{ \fontfamily{#1} \fontseries{#2} %\fontshape{#3} \selectfont}
%\changefont{pnc}{m}{n}
%\usepackage{times}

\tolerance=10000

\pagestyle{plain}

\DeclareMathOperator{\val}{val}

\DeclareMathOperator{\dom}{dom}
\DeclareMathOperator{\wt}{wt}

\DeclareMathOperator{\Run}{Run}

\DeclareMathOperator{\Label}{label}

\DeclareMathOperator{\dd}{d}

\newcommand{\Def}{\text{\sc Def}}

\newcommand{\Rec}{\text{\sc Rec}}

\newcommand{\wRdl}{\text{\sc wRdl}}
\newcommand{\Rdl}{\text{\sc Rdl}}

\newcommand{\Unamb}{\mathcal N^{\text{\sc Unamb}}}
\newcommand{\Det}{\mathcal N^{\text{\sc Det}}}
\newcommand{\Seq}{\mathcal N^{\text{\sc Seq}}}
\newcommand{\Unambb}{\mathcal H^{\text{\sc Unamb}}}
\newcommand{\Dett}{\mathcal H^{\text{\sc Det}}}
\newcommand{\Seqq}{\mathcal H^{\text{\sc Seq}}}

\newcommand{\nSeq}{\mathcal N}

\newcommand{\M}{\mathbb M}
\newcommand{\A}{\mathcal A}

\newcommand{\V}{\mathcal V}

\newcommand{\zero}{\mathbb 0}
\newcommand{\one}{\mathbb 1}

\newcommand{\TRUE}{\text{\sc True}}

\newcommand{\Rp}{\mathbb R_{\ge 0}}

\begin{document}
\title{A Nivat Theorem for Weighted Timed Automata and  Weighted Relative Distance Logic\thanks{The final version appeared in the Proceedings of the 41st International Colloquium on Automata, Languages, and Programming (ICALP 2014) and is available at link.springer.com; DOI: 10.1007/978-3-662-43951-7\_15}}
\author{Manfred Droste and Vitaly Perevoshchikov\thanks{Supported by  DFG Graduiertenkolleg 1763 (QuantLA)}}
\institute{Universit\"at Leipzig,  Institut f\"ur Informatik, \\ 
04109 Leipzig, Germany\\
\email{\{droste,perev\}@informatik.uni-leipzig.de}
}

\maketitle

\begin{abstract}
Weighted timed automata (WTA) model quantitative aspects of real-time systems like continuous consumption of memory, power or financial resources. They accept quantitative timed languages where every timed word is mapped to a value, e.g., a real number. In this paper, we prove a Nivat theorem for WTA which states that recognizable quantitative timed languages are exactly those which can be obtained from recognizable boolean timed languages with the help of several simple operations. We also introduce a weighted extension of relative distance logic developed by Wilke, and we show that our weighted relative distance logic and WTA are equally expressive. The proof of this result can be derived from our Nivat theorem and Wilke's theorem for relative distance logic. Since the proof of our Nivat theorem is constructive, the translation process from logic to automata and vice versa is also constructive. This leads to decidability results for weighted relative distance logic. 
\begin{keywords}
Weighted timed automata, linearly priced timed automata, average behavior, discounting, Nivat's theorem, quantitative logic.
\end{keywords}
\end{abstract}

\section{Introduction}

Timed automata introduced by Alur and Dill \cite{AD94} are a prominent model for real-time systems. 
Timed automata form finite representations of infinite-state automata for which various fundamental results from the theory of finite-state automata can be transferred to the timed setting. 
Although time has a quantitative nature, the questions asked in the theory of timed automata are of a qualitative kind. On the other side, quantitative aspects of systems, e.g., costs, probabilities and energy consumption can be modelled using weighted automata, i.e., classical nondeterministic automata with a transition weight function. The behaviors of weighted automata can be considered as quantitative languages (also known as formal power series) where every word carries a value. Semiring-weighted automata have been extensively studied in the literature (cf. \cite{BR88,Eil74,KS86} and the handbook of weighted automata \cite{DKV09}). 

Weighted extensions of timed automata are of much interest for the real-time community, since weighted timed automata (WTA) can model continuous time-dependent consumption of resources. In the literature, various models of WTA were considered, e.g., linearly priced timed automata \cite{ATP01,BFHL01, LBBF01},
multi-weighted timed automata with knapsack-problem objective \cite{LR05}, and WTA with measures like average, reward-cost ratio \cite{BBL04, BBL08} and discounting \cite{AT11, FL09, FL092}. In \cite{Qua10, Qua11}, WTA over semi\-rings were studied with respect to the classical automata-theoretic questions. However, various models, e.g., WTA with average and discounting measures as well as multi-weighted automata cannot be defined using semirings. For the latter situations, only several algorithmic problems were handled. But many questions whether the results known from the theories of timed and weighted automata also hold for WTA remain open. Moreover, there is no unified framework for WTA. 

The main goal of this paper is to build a bridge between the theories of WTA and timed automata. First, we develop a general model of {\em timed valuation monoids} for WTA. Recall that Nivat's theorem \cite{Ni68} is one of the fundamental characterizations of rational transductions and establishes a connection between rational transductions and rational languages. Our first main result is an extension of Nivat's theorem to WTA over timed valuation monoids. By Nivat's theorem for semiring-weighted automata described recently in \cite{DK}, recognizable quantitative languages are exactly those which can be constructed from recognizable languages using operations like morphisms and intersections. The proof of this result requires the fact that finite automata are determinizable. However, timed automata do not enjoy this property. Nevertheless, for idempotent timed valuation monoids which model all mentioned examples of WTA, we do not need determinization. In this case, our Nivat theorem for WTA is similar to the one for weighted automata. In the non-idempotent case, we give an example showing that this statement does not hold true. But in this case we can establish a connection between recognizable quantitative timed languages and sequentially, deterministically or unambiguously recognizable timed languages. 

As an application of our Nivat theorem, we provide a characterization of recognizable quantitative timed languages by means of quantitative logics. The classical B\"uchi-Elgot theorem \cite{Buc60} was extended to both weighted \cite{DG07, DG09, DM12} and timed settings \cite{Wil94,Wil942}. In \cite{Qua10, Qua11}, a semiring-weighted extension of Wilke's relative distance logic \cite{Wil94,Wil942} was considered. Here, we develop a different weighted version of relative distance logic based on our notion of timed valuation monoids. In our second main result, we show that this logic and WTA have the same expressive power. For the proof of this result, we use a new proof technique and our Nivat theorem to derive our result from the corresponding result for unweighted logic \cite{Wil94, Wil942}. Since the proof of our Nivat theorem is constructive, the translation process from weighted relative distance logic to WTA and vice versa is constructive. This leads to decidability results for weighted relative distance logic. In particular, based on the results of \cite{ATP01,BFHL01, LBBF01}, we show the decidability of several weighted extensions of the satisfiability problem for our logic. 

\section{Timed Automata}

An {\em alphabet} is a non-empty finite set.
Let $\Sigma$ be a non-empty set. A {\em finite word} over $\Sigma$ is a finite sequence $a_1 ... a_n$ where $n \ge 0$ and $a_1, ..., a_n \in \Sigma$. If $n \ge 1$, then we say that $w$ is {\em non-empty}. Let $\Sigma^+$ denote the set of all non-empty words over $\Sigma$. 
Let $\Rp$ denote the set of all non-negative real numbers. A {\em finite timed word} over $\Sigma$ is a finite word over $\Sigma \times \Rp$, i.e., a finite sequence $(a_1, t_1) ... (a_n, t_n)$ where $n \ge 0$, $a_1, ..., a_n \in \Sigma$ and $t_1, ..., t_n \in \Rp$. Let $|w| = n$ and $\langle w \rangle = t_1 + ... + t_n$ and let $\mathbb T \Sigma^+ = (\Sigma \times \Rp)^+$,
the set of all non-empty finite timed words. Any set $\mathcal L \subseteq \mathbb T \Sigma^+$ of timed words is called a {\em timed language}. 

Let $C$ be a finite set of {\em clock variables} ranging over $\Rp$. A {\em clock constraint} over $C$ is either $\TRUE$ or (if $C$ is non-empty) a conjunction of formulas of the form $x \bowtie c$ where $x \in C$, $c \in \mathbb N$ and ${\bowtie} \in \{<, \le, =, \ge, >\}$. Let $\Phi(C)$ denote the set of all clock constraints over $C$. A {\em clock valuation} over $C$ is a mapping $\nu: C \to \Rp$ which assigns a value to each clock variable. Let $\Rp^C$ be the set of all clock valuations over $C$. The {\em satisfaction relation} ${\models} \subseteq \Rp^C \times \Phi(C)$ is defined as usual. Now let $\nu \in \Rp^C$, $t \in \Rp$ and $\Lambda \subseteq C$. Let $\nu + t$ denote the clock valuation $\nu' \in \Rp^C$ such that $\nu'(x) = \nu(x) + t$ for all $x \in C$. Let $\nu[\Lambda := 0]$ denote the clock valuation $\nu' \in \Rp^C$ such that $\nu'(x) = 0$ for all $x \in \Lambda$ and $\nu'(x) = \nu(x)$ for all $x \notin \Lambda$.

\begin{definition}
Let $\Sigma$ be an alphabet. A {\em timed automaton} over $\Sigma$ is a tuple ${\mathcal A = (L, C, I, E, F)}$ such that $L$ is a finite set of {\em locations}, $C$ is a finite set of {\em clocks}, $I, F \subseteq L$ are sets of {\em initial} resp. {\em final} locations and ${E \subseteq L \times \Sigma \times \Phi(C) \times 2^C \times L}$ is a finite set of {\em edges}.
\end{definition}
For an edge $e = (\ell, a, \phi, \Lambda, \ell')$, let $\Label(e) = a$ be the {\em label} of $e$. A {\em run} of $\mathcal A$ is a finite sequence
\begin{equation}
\label{Eq:DefRun}
\rho = (\ell_0, \nu_0) \xrightarrow{t_1} \xrightarrow{e_1} (\ell_1, \nu_1) \xrightarrow{t_2} \xrightarrow{e_2} ... \xrightarrow{t_n} \xrightarrow{e_n} (\ell_n, \nu_n)
\end{equation}
where $n \ge 1$, $\ell_0, \ell_1, ..., \ell_n \in L$, $\nu_0, \nu_1, ..., \nu_n \in \Rp^C$, $t_1, ..., t_n \in \Rp$ and ${e_1, ..., e_n \in E}$ satisfy the following conditions: $\ell_0 \in I$, $\nu_0(x) = 0$ for all $x \in C$, $\ell_n \in F$ and, for all $1 \le i \le n$, $e_i = (\ell_{i-1}, a_i, \phi_i, \Lambda_i, \ell_i)$ for some $a_i \in \Sigma$, $\phi_i \in \Phi(C)$ and $\Lambda_i \subseteq C$ such that $\nu_{i-1} + t_i \models \phi_i$ and $\nu_i = (\nu_{i-1} + t_i)[\Lambda_i := 0]$. The {\em label} of $\rho$ is the timed word $\Label(\rho) = (\Label(e_1), t_1) ... (\Label(e_n), t_n) \in \mathbb T \Sigma^+$. For any timed word $w \in \mathbb T \Sigma^+$, let $\Run_{\A}(w)$ denote the set of all runs $\rho$ of $\mathcal A$ such that $\Label(\rho) = w$. Let $\mathcal L(\A) = \{w \in \mathbb T \Sigma^+ \; | \; \Run_{\A}(w) \neq \emptyset\}$. We say that an arbitrary timed language $\mathcal L \subseteq \mathbb T \Sigma^+$ is {\em recognizable} if there exists a timed automaton over $\Sigma$ such that $\mathcal L(\A) = \mathcal L$. Let ${\mathcal A = (L, C, I, E, F)}$ be a timed automaton over $\Sigma$. We say that $\mathcal A$ is {\em unambiguous} if $|\Run_{\A}(w)| \le 1$ for all $w \in \mathbb T \Sigma^+$. We call $\A$ {\em deterministic} if $|I| = 1$ and, for all $e_1 = (\ell, a, \phi_1, \Lambda_1, \ell_1) \in E$ and $e_2 = (\ell, a, \phi_2, \Lambda_2, \ell_2) \in E$ with $e_1 \neq e_2$, there exists no clock valuation $\nu \in \Rp^C$ with $\nu \models \phi_1 \wedge \phi_2$.
We call $\A$ {\em sequential} if $|I| = 1$ and, for all $e_1 = (\ell, a, \phi_1, \Lambda_1, \ell_1) \in E$ and $e_2 = (\ell, a, \phi_2, \Lambda_2, \ell_2) \in E$, we have $e_1 = e_2$; this property can be viewed as a strong form of determinism. Based on these notions, we can define {\em sequentially recognizable}, {\em deterministically recognizable} and {\em unambiguously recognizable} timed languages. 

\section{Weighted Timed Automata}

In this section, we introduce a general model of weighted timed automata (WTA) over {\em timed valuation monoids}. We will show that our new model covers a variety of situations known from the literature: 
linearly priced timed automata \cite{ATP01, BFHL01,LBBF01} and WTA with the measures like average \cite{BBL04, BBL08} and discounting \cite{AT11,FL09,FL092}.

A {\em timed valuation monoid} is a tuple $\M = (M, +, \val, \zero)$ where $(M, +, \zero)$ is a commutative monoid and $\val: \mathbb T (M \times M)^+ \to M$ is a {\em timed valuation function}.
We will say that $M$ is the {\em domain} of $\M$.
We say that $\M$ is {\em idempotent} if $+$ is idempotent, i.e., $m + m = m$ for all $m \in M$.

Let $\Sigma$ be an alphabet and $\M = (M, +, \val, \zero)$ a timed valuation monoid. A {\em weighted timed automaton} (WTA) over $\Sigma$ and $\M$ is a tuple $\A = (L, C, I, E, F, \wt)$ where $(L, C, I, E, F)$ is a timed automaton over $\Sigma$ and $\wt: L \cup E \to M$ is a {\em weight function}. Let $\rho$ be a run of $\A$ of the form (\ref{Eq:DefRun}). Let $\wt^{\sharp}(\rho) \in \mathbb T (M \times M)^+$ be the timed word $(u_1, t_1) ... (u_n, t_n)$ where, for all $1 \le i \le n$, $u_i = (\wt(\ell_{i-1}), \wt(e_i))$. Then, the {\em weight} of $\rho$ is defined as $\wt_{\A}(\rho) = \val(\wt^{\sharp}(\rho)) \in M$. The {\em behavior} of $\mathcal A$ is the mapping $||\A||: \mathbb T \Sigma^+ \to M$ defined by
$
{||\A||(w) = \sum (\wt_{\A}(\rho) \; | \; \rho \in \Run_{\A}(w))}
$
 for all $w \in \mathbb T \Sigma^+$. A {\em quantitative timed language} (QTL) over $\M$ is a mapping ${\mathbb L: \mathbb T \Sigma^+ \to M}$. We say that $\mathbb L$ is {\em recognizable} if there exists a WTA $\A$ over $\Sigma$ and $\M$ such that $\mathbb L = ||\A||$.

\begin{example}
\label{Ex:TVM}
All of the subsequent WTA model the property that staying in a location invokes costs depending on the length of the stay; the subsequent transition also invokes costs but happens instantaneously.
We assume that, for all $x \in \mathbb R \cup \{\infty\}$, ${x \cdot \infty = \infty \cdot x = \infty}$ and $x + \infty = \infty + x = \infty$.
\begin{itemize}
\item [(a)] {\em Linearly priced timed automata} were considered in \cite{ATP01, BFHL01, LBBF01}. We can describe this model by the timed valuation monoid  \linebreak ${\M^{\text{sum}} = (\mathbb R \cup \{\infty\}, \min, \val^{\text{sum}}, \infty)}$ where $\val^{\text{sum}}$ is defined by
$\val^{\text{sum}}(v) = \sum_{i = 1}^n (m_i \cdot t_i + m'_i)$ for all 
$v = ((m_1, m'_1), t_1) ... ((m_n, m'_n), t_n) \in \mathbb T (M \times M)^+$.

\item [(b)] The situation of the average behavior for WTA considered in \linebreak \cite{BBL04, BBL08} can be described by means of the timed valuation monoid \linebreak ${\M^{\text{avg}} = (\mathbb R \cup \{\infty\}, \min, \val^{\text{avg}}, \infty)}$ where $\val^{\text{avg}}$ is defined as follows. Let ${v = ((m_1, m'_1), t_1) ... ((m_n, m'_n), t_n) \in \mathbb T (M \times M)^+}$. If $\langle v \rangle > 0$, then we let
$
\val^{\text{avg}}(v) = \frac{\sum_{i = 1}^n (m_i \cdot t_i + m_i')}{\sum_{i = 1}^n t_i}.
$
If $\langle v \rangle = 0$, $m_1 = ... = m_n \in \mathbb R$ and $m_1' = ... = m_n' = 0$, then we put $\val^{\text{avg}}(v) = m_1$. Otherwise, we put $\val^{\text{avg}}(v) = \infty$.

\item [(c)] The model of WTA with the discounting measure was investigated in \cite{AT11, FL09, FL092}. These WTA can be considered as WTA over the timed valuation monoid ${\M^{\text{disc}_{\lambda}} = (\mathbb R \cup \{\infty\}, \min, \val^{\text{disc}_{\lambda}}, \infty)}$ where $0 < \lambda < 1$ is a {\em discounting factor} and $\val^{\text{disc}_{\lambda}}$ is defined for all $v = ((m_1, m_1'), t_1) ... ((m_n, m_n'), t_n) \in \mathbb T (M \times M)^+$ by 
$
\val^{\text{disc}_\lambda}(v) = \sum_{i = 1}^n \lambda^{t_1 + ... + t_{i-1}} \cdot \big(\int_{0}^{t_i} m_i \cdot \lambda^{\tau} d \tau + \lambda^{t_i} \cdot m_i'\big).
$
\end{itemize}
Note that the timed valuation monoids $\M^{\text{sum}}$, $\M^{\text{avg}}$ and $\M^{\text{disc}_{\lambda}}$ are idempotent. 
\end{example}

\section{Closure Properties}
\label{Sect:Closure}

In this section, we consider several closure properties of recognizable quantitative timed languages which we will use for the proof of our Nivat theorem and which could be of independent interest. For lack of space, we will omit the proofs.

Let $\Sigma$ be a set, $\Gamma$ an alphabet and $h: \Gamma \to \Sigma$ a mapping. For a timed word ${v = (\gamma_1, t_1) ... (\gamma_n, t_n) \in \mathbb T \Gamma^+}$, we let $h(v) = (h(\gamma_1), t_1) ... (h(\gamma_n), t_n) \in \mathbb T \Sigma^+$. Then, for a QTL $r: \mathbb T \Gamma^+ \to M$ over $\M$, we define the QTL $h(r): \mathbb T \Sigma^+ \to M$ over $\M$ by
$
h(r)(w) = \sum (r(v) \; | \; v \in \mathbb T \Gamma^+ \text{ and } h(v) = w)
$
for all $w \in \mathbb T \Sigma^+$. Observe that for any $w \in \mathbb T \Sigma^+$ there are only finitely many $v \in \mathbb T \Gamma^+$ with $h(v) = w$, hence the sum exists in $(M, +)$.

\begin{lemma}
\label{Lemma:Homo}
Let $\Sigma, \Gamma$ be alphabets, $\M = (M, +, \val, \zero)$ a timed valuation monoid and $h: \Gamma \to \Sigma$ a mapping. If $r: \mathbb T \Gamma^+ \to M$ is a recognizable QTL over $\M$, then the QTL $h(r)$ is also recognizable.
\end{lemma}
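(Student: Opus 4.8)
My plan is to take a WTA recognizing $r$ and turn it into one recognizing $h(r)$ by relabeling each edge through $h$. So let $\A = (L,C,I,E,F,\wt)$ be a WTA over $\Gamma$ and $\M$ with $||\A|| = r$. The naive attempt replaces every edge $(\ell,\gamma,\phi,\Lambda,\ell')\in E$ by $(\ell,h(\gamma),\phi,\Lambda,\ell')$, keeping all weights. The point I expect to be the main obstacle is that $h$ need not be injective: two distinct edges $e_1=(\ell,\gamma_1,\phi,\Lambda,\ell')$ and $e_2=(\ell,\gamma_2,\phi,\Lambda,\ell')$ with $h(\gamma_1)=h(\gamma_2)$ would be identified in the new edge set (edges are tuples over $\Sigma$), thereby merging runs that ought to be summed separately and, in the non-idempotent case, losing one of the weights $\wt(e_1),\wt(e_2)$.

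To prevent this collision, I would record the last-read letter of $\Gamma$ in the location. Fixing a fresh symbol $\bot\notin\Gamma$, define a WTA $\mathcal B=(L',C,I',E',F',\wt')$ over $\Sigma$ and $\M$ by $L'=L\times(\Gamma\cup\{\bot\})$, $I'=I\times\{\bot\}$ and $F'=F\times\Gamma$; and for every edge $e=(\ell,\gamma,\phi,\Lambda,\ell')\in E$ and every $\delta\in\Gamma\cup\{\bot\}$ put an edge $e_\delta=((\ell,\delta),h(\gamma),\phi,\Lambda,(\ell',\gamma))\in E'$, with $\wt'((\ell,\delta))=\wt(\ell)$ and $\wt'(e_\delta)=\wt(e)$. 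Since the target location of $e_\delta$ carries $\gamma$ in its second component, edges coming from different $e\in E$ can never be identified, so the collision disappears. Moreover $\mathcal B$ keeps the clocks, guards and resets of $\A$ unchanged, hence realizes exactly the same clock bookkeeping.

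It then remains to check $||\mathcal B||(w)=h(r)(w)$ for every $w\in\mathbb T\Sigma^+$, and the heart of the argument is a weight-preserving bijection between $\Run_{\mathcal B}(w)$ and $\bigsqcup_{v\in\mathbb T\Gamma^+,\,h(v)=w}\Run_{\A}(v)$ (the latter is a disjoint union because runs on different timed words have different labels). A run of $\A$ on $v=(\gamma_1,t_1)\dots(\gamma_n,t_n)$ through locations $\ell_0,\dots,\ell_n$ and edges $e^1,\dots,e^n$ is sent to the run of $\mathcal B$ on $h(v)=w$ through $(\ell_0,\bot),(\ell_1,\gamma_1),\dots,(\ell_n,\gamma_n)$ along $e^1_\bot,e^2_{\gamma_1},\dots,e^n_{\gamma_{n-1}}$; conversely, the second components of the locations visited by any run of $\mathcal B$ on $w=(a_1,t_1)\dots(a_n,t_n)$ read off a word $v=(\gamma_1,t_1)\dots(\gamma_n,t_n)$ with $h(v)=w$ and the corresponding run of $\A$, and these two assignments are mutually inverse. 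For corresponding runs the pairs $(\wt'((\ell_{i-1},\delta_{i-1})),\wt'(e^i_{\delta_{i-1}}))=(\wt(\ell_{i-1}),\wt(e^i))$ agree for all $i$, so the associated weight words $\wt^{\sharp}$ coincide and $\val$ returns the same value. Finally, since for a fixed $v$ the set $\Run_{\A}(v)$ is finite and there are only finitely many $v$ with $h(v)=w$, I may reorder the finite sum freely using commutativity of $+$, obtaining $||\mathcal B||(w)=\sum_{h(v)=w}\sum_{\rho\in\Run_{\A}(v)}\wt_{\A}(\rho)=\sum_{h(v)=w}r(v)=h(r)(w)$.
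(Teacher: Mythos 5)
Your proof is correct, and it is essentially the construction the paper has in mind: the paper omits the proof and points to the relabeling construction of \cite{DV10}, Lemma~1, which likewise records the last-read $\Gamma$-letter in the location so that edges with the same $h$-image are not identified (and, via your $\bot$-component, so that initial runs are not duplicated), exactly as you do. Your run bijection and the resulting rearrangement of the finite sum are the standard and correct way to finish.
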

For the proof of this lemma, we use a similar construction as in \cite{DV10}, Lemma 1.

Let $g: \Sigma \to M \times M$ be a mapping. We denote by ${\val \circ g: \mathbb T \Sigma^+ \to M}$ the QTL over $\M$ defined for all $w \in \mathbb T \Sigma^+$ by ${(\val \circ g)(w) = \val(g(w))}$. We say that a timed valuation monoid $\M = (M, +, \val, \zero)$ is {\em location-independent} if, for any $v = ((m_1, m_1'), t_1) ... ((m_n, m_n'), t_n) \in \mathbb T (M \times M)^+$ and ${v' = ((k_1, k_1'), t_1) ... ((k_n, k_n'), t_n) \in \mathbb T (M \times M)^+}$ with $m_i' = k_i'$ for all $1 \le i \le n$, we have $\val(v) = \val(v')$.

\begin{lemma}
\label{Lemma:Comp}
Let $\Sigma$ be an alphabet, $\M = (M, +, \val, \zero)$ a timed valuation monoid and $g: \Sigma \to M \times M$ a mapping. Then, $\val \circ g$ is unambiguously recognizable. If $\M$ is location-independent, then $\val \circ g$ is sequentially recognizable.
\end{lemma}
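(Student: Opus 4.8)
The plan is to exhibit, in each case, an explicit WTA $\A$ over $\Sigma$ and $\M$ that has exactly one run on every timed word and whose unique run realizes the valuation $\val(g(w))$, so that $\|\A\| = \val \circ g$. In both constructions I would take $\A$ to have no clocks ($C = \emptyset$), so every clock constraint is $\TRUE$ and the timed structure is irrelevant; the entire issue is combinatorial and concerns how the weight of a run is assembled.

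The obstacle to keep in view is the \emph{off-by-one} phenomenon in $\wt^{\sharp}(\rho)$: for a run $\rho = (\ell_0,\nu_0) \to \ldots \to (\ell_n,\nu_n)$ reading $(a_1,t_1)\ldots(a_n,t_n)$, the $i$-th entry of $\wt^{\sharp}(\rho)$ is $(u_i,t_i)$ with $u_i = (\wt(\ell_{i-1}), \wt(e_i))$, so the first component is the weight of the location being \emph{left}, while the letter $a_i$ (with $g(a_i) = (m_i, m_i')$) is read on the edge $e_i$. Matching $u_i = g(a_i)$ forces $\wt(e_i) = m_i'$, which is immediate, but also $\wt(\ell_{i-1}) = m_i$, i.e. the location weight must anticipate the letter read on the \emph{next} edge.

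For the unambiguous part I would resolve this by letting each location commit in advance to the letter it is about to read. Concretely, take locations $\{q_a : a \in \Sigma\} \cup \{f\}$, make every $q_a$ initial and $f$ the unique final location, and for all $a,b \in \Sigma$ add an edge $q_a \xrightarrow{a} q_b$ together with a terminating edge $q_a \xrightarrow{a} f$; set $\wt(q_a) = m_a$ and give every $a$-labelled edge weight $m_a'$, where $g(a) = (m_a, m_a')$. On a word $(a_1,t_1)\ldots(a_n,t_n)$ the only surviving run is $q_{a_1} \to q_{a_2} \to \ldots \to q_{a_n} \to f$: the start is forced because the edge out of $q_c$ reads $c$; each intermediate target is forced since from it one must still read the correct next letter; and the last step is forced to $f$ because $f$ is the only final location. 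The dedicated sink $f$ is precisely the device preventing several accepting guesses of a non-existent next letter, which would otherwise break unambiguity. For this run $\ell_{i-1} = q_{a_i}$, hence $\wt^{\sharp}(\rho) = g(w)$ and $\wt_{\A}(\rho) = \val(g(w))$, so $\|\A\| = \val \circ g$.

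The sequential part is more delicate, since sequentiality forbids two edges with the same source and label — exactly the branching the previous construction relies on — so the anticipation trick is unavailable. Here I would instead use the single-location automaton with $L = \{\star\}$, $I = F = \{\star\}$ and one self-loop $\star \xrightarrow{a} \star$ per letter $a$, which is trivially sequential and has a unique run on every word. Its location weight $\wt(\star)$ is a fixed constant and cannot encode $m_{a_i}$, so the first components of $\wt^{\sharp}(\rho)$ are ``wrong''; this is exactly where location-independence enters. Giving each $a$-loop weight $m_a'$ and $\wt(\star)$ an arbitrary value, the word $\wt^{\sharp}(\rho)$ agrees with $g(w)$ in all second components, so by location-independence $\val(\wt^{\sharp}(\rho)) = \val(g(w))$, whence $\|\A\| = \val \circ g$. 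In both cases the only thing needing genuine verification is uniqueness of the run; the sum defining the behavior then collapses to a single term and the weight bookkeeping is routine.
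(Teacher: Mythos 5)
Your proof is correct. The paper omits its proof of this lemma (``for lack of space''), but your constructions are exactly the natural ones its statement calls for: you correctly identify the key difficulty --- that $\wt^{\sharp}(\rho)$ pairs the weight of the location being left with the edge being taken, so the location must anticipate the next letter --- and resolve it with the letter-guessing locations plus a final sink $f$ (which is indeed what keeps the automaton unambiguous rather than merely finitely ambiguous), while the one-location loop automaton with location-independence handles the sequential case. Both automata have exactly one run per word, so the behavior collapses to $\val(g(w))$ as you claim.
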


However, in general, $\val \circ g$ is not deterministically recognizable (and hence not sequentially recognizable). Let $\Sigma = \{a, b\}$ and $\M = \M^{\text{sum}}$ as in Example \ref{Ex:TVM} (a). Let $g(a) = (1, 0)$ and $g(b) = (2, 0)$. Then, one can show that $\val \circ g$ is not deterministically recognizable.

Let $\mathcal L \subseteq \mathbb T \Sigma^+$ be a timed language and $r: \mathbb T \Sigma^+ \to M$ a QTL over $\M$. The {\em intersection} $(r \cap \mathcal L): \mathbb T \Sigma^+ \to M$ is the QTL over $\M$ defined by $(r \cap \mathcal L)(w) = r(w)$ if $w \in \mathcal L$ and $(r \cap \mathcal L)(w) = \zero$ if $w \in \mathbb T \Sigma^+ \setminus \mathcal L$. 

\begin{example}
\label{Ex:Bad}
As opposed to weighted untimed automata, recognizable quantitative timed  languages are not closed under the intersection with recognizable timed languages. Let $\Sigma$ be a singleton alphabet and $\mathcal L$ a recognizable timed language over $\Sigma$ which is not unambiguously recognizable. Wilke \cite{Wil94} showed that such a language exists. Consider the non-idempotent and location-independent timed valuation monoid $\M = (\mathbb N, +, \val, 0)$ where $+$ is the usual addition of natural numbers and $\val(v) = m_1' \cdot ... \cdot m_n'$ for all $v = ((m_1, m_1'), t_1) ... ((m_n, m_n'), t_n) \in \mathbb T (\mathbb N \times \mathbb N)^+$. Let the QTL ${r: \mathbb T \Sigma^+ \to \mathbb N}$ over $\M$ be defined by $r(w) = 1$ for all $w \in \mathbb T \Sigma^+$. Then, $r$ is recognizable but $r \cap \mathcal L$ is not recognizable. 
\end{example}

Nevertheless, the intersection enjoys the following closure properties. 

\begin{lemma}
\label{Lemma:Inter}
Let $\Sigma$ be an alphabet, $\M = (M, +, \val, \zero)$ a timed valuation monoid, $\mathcal L \subseteq \mathbb T \Sigma^+$ a recognizable timed language and $r: \mathbb T \Sigma^+ \to M$ a recognizable  QTL over $\M$. If $\M$ is idempotent, then $r \cap \mathcal L$ is recognizable. If $\mathcal L$ is unambiguously recognizable, then $r \cap \mathcal L$ is recognizable. If $\mathcal L, r$ are unambiguously (deterministically, sequentially) recognizable, then $r \cap \mathcal L $ is also unambiguously (deterministically, sequentially) recognizable. 
\end{lemma}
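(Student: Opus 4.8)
First I would set up a single product construction that serves all three statements. Since $r$ is recognizable, fix a WTA $\mathcal B = (L_{\mathcal B}, C_{\mathcal B}, I_{\mathcal B}, E_{\mathcal B}, F_{\mathcal B}, \wt_{\mathcal B})$ over $\Sigma$ and $\M$ with $||\mathcal B|| = r$, and since $\mathcal L$ is recognizable, fix a timed automaton $\mathcal C = (L_{\mathcal C}, C_{\mathcal C}, I_{\mathcal C}, E_{\mathcal C}, F_{\mathcal C})$ over $\Sigma$ with $\mathcal L(\mathcal C) = \mathcal L$, renaming clocks so that $C_{\mathcal B}$ and $C_{\mathcal C}$ are disjoint. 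I define the product $\mathcal A$ to have locations $L_{\mathcal B} \times L_{\mathcal C}$, clocks $C_{\mathcal B} \cup C_{\mathcal C}$, initial (resp.\ final) locations $I_{\mathcal B} \times I_{\mathcal C}$ (resp.\ $F_{\mathcal B} \times F_{\mathcal C}$), and, for each pair of equally labelled edges $(\ell, a, \phi, \Lambda, \ell') \in E_{\mathcal B}$ and $(m, a, \psi, \Pi, m') \in E_{\mathcal C}$, the synchronized edge $((\ell, m), a, \phi \wedge \psi, \Lambda \cup \Pi, (\ell', m'))$. The crucial design choice is that the weights are inherited from $\mathcal B$ alone: I set $\wt_{\mathcal A}(\ell, m) = \wt_{\mathcal B}(\ell)$ and give the synchronized edge above the weight $\wt_{\mathcal B}(\ell, a, \phi, \Lambda, \ell')$.

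The first real step is the run bijection: for every $w \in \mathbb T \Sigma^+$ the runs in $\Run_{\mathcal A}(w)$ correspond exactly to pairs $(\rho_{\mathcal B}, \rho_{\mathcal C})$ with $\rho_{\mathcal B} \in \Run_{\mathcal B}(w)$ and $\rho_{\mathcal C} \in \Run_{\mathcal C}(w)$, which follows at once from the disjointness of the clock sets and the conjunction of guards. Because the product weights copy those of $\mathcal B$ and the time stamps are unchanged, the weight sequences coincide, $\wt^{\sharp}(\rho_{\mathcal A}) = \wt^{\sharp}(\rho_{\mathcal B})$, so after applying $\val$ we get $\wt_{\mathcal A}(\rho_{\mathcal A}) = \wt_{\mathcal B}(\rho_{\mathcal B})$. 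Summing over all runs and using commutativity and associativity of $+$ to factor the double sum, I would obtain for every $w$
\begin{equation*}
||\mathcal A||(w) = \sum_{\rho_{\mathcal C} \in \Run_{\mathcal C}(w)} \Big( \sum_{\rho_{\mathcal B} \in \Run_{\mathcal B}(w)} \wt_{\mathcal B}(\rho_{\mathcal B}) \Big) = \sum_{\rho_{\mathcal C} \in \Run_{\mathcal C}(w)} r(w),
\end{equation*}
that is, $||\mathcal A||(w)$ is the sum of $|\Run_{\mathcal C}(w)|$ copies of $r(w)$. In particular the value is the empty sum $\zero$ exactly when $w \notin \mathcal L$, which already gives the correct value on $\mathbb T \Sigma^+ \setminus \mathcal L$.

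It then remains to force $||\mathcal A||(w) = r(w)$ for $w \in \mathcal L$, and each hypothesis does this for its own reason. If $\M$ is idempotent, summing $k \ge 1$ copies of $r(w)$ returns $r(w)$, so $||\mathcal A|| = r \cap \mathcal L$ with no further assumption on $\mathcal C$. If instead $\mathcal L$ is unambiguously recognizable, I take $\mathcal C$ unambiguous, whence $|\Run_{\mathcal C}(w)| \le 1$ and the displayed sum is a single copy of $r(w)$ on $\mathcal L$. For the last statement I take $\mathcal B$ and $\mathcal C$ both unambiguous (resp.\ deterministic, sequential); any such $\mathcal C$ again satisfies $|\Run_{\mathcal C}(w)| \le 1$, so the value is correct, and it only remains to verify that $\mathcal A$ inherits the property. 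Here $|I_{\mathcal A}| = |I_{\mathcal B}| \cdot |I_{\mathcal C}| = 1$; two distinct synchronized edges out of $(\ell, m)$ with the same label differ in at least one component, so their guards are disjoint by determinism of that component (determinism), or such a pair simply cannot occur (sequentiality); and the run bijection gives $|\Run_{\mathcal A}(w)| = |\Run_{\mathcal B}(w)| \cdot |\Run_{\mathcal C}(w)| \le 1$ (unambiguity).

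The construction and the run bijection are routine bookkeeping; the one step that genuinely consumes the hypotheses—and is therefore the crux—is the passage from $\sum_{\rho_{\mathcal C}} r(w)$ back to the single value $r(w)$. The fact that this step must fail in general, precisely when $\M$ is non-idempotent and $\mathcal L$ is ambiguous, is exactly what Example \ref{Ex:Bad} witnesses, so I would keep the three cases tied to the single displayed identity and emphasize that each hypothesis is what collapses the multiset of copies.
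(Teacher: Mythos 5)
Your proof is correct and follows the same route the paper indicates: a product construction on timed automata with disjoint clock sets, weights inherited from the weighted component, and the run bijection reducing everything to collapsing $|\Run_{\mathcal C}(w)|$ copies of $r(w)$ via idempotency or unambiguity. The analysis of which hypothesis closes the gap in each case, and the link to Example~\ref{Ex:Bad}, matches the paper's intent.
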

For the proof, we use a kind of product construction for timed automata. 

\section{A Nivat Theorem for Weighted Timed Automata}

Nivat's theorem \cite{Ni68} (see also \cite{Be69}, Theorem 4.1) is one of the fundamental characterizations of rational transductions and establishes a connection between rational transductions and rational languages. A version for semiring-weighted automata was given in  \cite{DK}; this shows a connection between recognizable quantitative and qualitative languages. In this chapter, we prove a Nivat-like theorem for recognizable quantitative timed languages.

Let $\Sigma$ be an alphabet and $\M = (M, +, \val, \zero)$ a timed valuation monoid. Let $\Rec(\Sigma, \M)$ denote the collection of all QTL recognizable by a WTA over $\Sigma$ and $\M$. Let $\nSeq(\Sigma, \M)$ (with $\mathcal N$ standing for Nivat) denote the set of all QTL ${\mathbb L: \mathbb T \Sigma^+ \to M}$ over $\M$ such that there exist an alphabet $\Gamma$, mappings ${h: \Gamma \to \Sigma}$ and ${g: \Gamma \to M \times M}$ and a recognizable timed language $\mathcal L \subseteq \mathbb T \Gamma^+$ such that $\mathbb L = h((\val \circ g) \cap \mathcal L)$. Let the collection $\Seq(\Sigma, \M)$ be defined like $\nSeq(\Sigma, \M)$ with the only difference that $\mathcal L$ is sequentially recognizable. The collections $\Unamb(\Sigma, \M)$ and $\Det(\Sigma, \M)$ are defined similarly using unambiguously resp. deterministically recognizable timed languages.

Our Nivat theorem for weighted timed automata is the following.

\begin{theorem}
\label{Theorem:Nivat}
Let $\Sigma$ be an alphabet and $\M$ a timed valuation monoid. Then,
$\Rec(\Sigma, \M) = \Seq(\Sigma, \M) = \Det(\Sigma, \M) =  \Unamb(\Sigma, \M) \subseteq \nSeq(\Sigma, \M)$. \linebreak
If $\M$ is idempotent,  then $\Rec(\Sigma, \M) = \nSeq(\Sigma, \M)$.
\end{theorem}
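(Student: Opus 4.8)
The plan is to establish the four-fold equality through the cycle of inclusions
\[
\Rec(\Sigma,\M) \subseteq \Seq(\Sigma,\M) \subseteq \Det(\Sigma,\M) \subseteq \Unamb(\Sigma,\M) \subseteq \Rec(\Sigma,\M),
\]
and then to observe that $\Unamb(\Sigma,\M) \subseteq \nSeq(\Sigma,\M)$. The two inner inclusions and the last one require nothing beyond the definitions: every sequential timed automaton is deterministic (the sequentiality condition makes the guard-disjointness clause of determinism vacuous), and every deterministic timed automaton is unambiguous (it admits at most one run per timed word). Hence the underlying classes of timed languages are nested, and since the four Nivat-type classes differ only in the recognizability class from which $\mathcal L$ is drawn, the inclusions $\Seq \subseteq \Det \subseteq \Unamb \subseteq \nSeq$ follow immediately, where throughout all classes are understood over the fixed $\Sigma$ and $\M$.

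For $\Unamb(\Sigma,\M) \subseteq \Rec(\Sigma,\M)$ I would run a decomposition $\mathbb L = h((\val \circ g) \cap \mathcal L)$ with $\mathcal L$ unambiguously recognizable through the closure lemmas of Section~\ref{Sect:Closure}: Lemma~\ref{Lemma:Comp} gives that $\val \circ g$ is (unambiguously, hence) recognizable, the unambiguous case of Lemma~\ref{Lemma:Inter} gives that $(\val \circ g) \cap \mathcal L$ is recognizable, and Lemma~\ref{Lemma:Homo} gives that its image under $h$ is recognizable. Thus $\mathbb L \in \Rec(\Sigma,\M)$.

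The heart of the argument is $\Rec(\Sigma,\M) \subseteq \Seq(\Sigma,\M)$. Given a WTA $\A = (L, C, I, E, F, \wt)$ with $\mathbb L = \|\A\|$, I would take $\Gamma = E$ as the new alphabet, let $h = \Label$ read off the label of an edge, and set $g(e) = (\wt(\ell), \wt(e))$ for $e = (\ell, a, \phi, \Lambda, \ell') \in E$, recording the weight of the source location and of the edge. For $\mathcal L \subseteq \mathbb T E^+$ I would take the set of edge-encodings $(e_1, t_1) \dots (e_n, t_n)$ of runs of $\A$, recognized by the timed automaton $\A'$ obtained from $\A$ by relabelling each edge $e$ with the letter $e$ itself and adjoining a fresh unique initial location from which exactly the initial edges are available. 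The key point is that $\A'$ is sequential: for any source location and any letter $e \in E$ there is at most one edge labelled $e$, since the letter $e$ already determines its source, guard, reset set, and target. Since a run of $\A$ is uniquely reconstructible from its edge-time sequence, the map $\rho \mapsto (e_1, t_1) \dots (e_n, t_n)$ is a bijection from $\Run_{\A}(w)$ onto $\{v \in \mathcal L : h(v) = w\}$, and by construction $\val(g(v)) = \val(\wt^{\sharp}(\rho)) = \wt_{\A}(\rho)$ for the corresponding run. Summing over the fibre of $h$ then yields $h((\val \circ g) \cap \mathcal L)(w) = \sum_{\rho \in \Run_{\A}(w)} \wt_{\A}(\rho) = \mathbb L(w)$, so $\mathbb L \in \Seq(\Sigma,\M)$. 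I expect the main obstacle to be exactly this bookkeeping: enforcing the single-initial-location requirement while keeping $\A'$ sequential, and verifying that the fibres of $h$ over a fixed $w$ match the runs of $\A$ bijectively with no spurious or missing encodings.

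Finally, for idempotent $\M$ I would prove $\nSeq(\Sigma,\M) \subseteq \Rec(\Sigma,\M)$, which together with $\Rec \subseteq \nSeq$ (already obtained from $\Rec = \Unamb \subseteq \nSeq$) gives $\Rec = \nSeq$. Here a decomposition $\mathbb L = h((\val \circ g) \cap \mathcal L)$ has $\mathcal L$ only recognizable, so the unambiguous case of Lemma~\ref{Lemma:Inter} is unavailable; instead I would invoke its idempotent case, which guarantees that $(\val \circ g) \cap \mathcal L$ is recognizable whenever $\M$ is idempotent, and then close under $h$ via Lemma~\ref{Lemma:Homo} exactly as before. Idempotency is genuinely needed at this last step: Example~\ref{Ex:Bad} exhibits a non-idempotent $\M$ for which $\Rec \subsetneq \nSeq$, which is precisely what blocks the argument in the general case.
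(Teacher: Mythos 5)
Your proposal is correct and follows essentially the same route as the paper: the core inclusion $\Rec(\Sigma,\M) \subseteq \Seq(\Sigma,\M)$ via the edge alphabet $\Gamma = E$ with $h = \Label$, $g(e) = (\wt(\ell),\wt(e))$ and the sequentially recognizable run-encoding language is exactly the paper's Lemma~\ref{Lemma:Transitions}, and the remaining inclusions are obtained, as in the paper, from the definitional nesting of sequential, deterministic and unambiguous recognizability together with Lemmas~\ref{Lemma:Homo}, \ref{Lemma:Comp} and the appropriate cases of Lemma~\ref{Lemma:Inter}. Your explicit handling of the single-initial-location requirement is a detail the paper's sketch leaves implicit, but it is handled correctly.
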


As opposed to the result of \cite{DK} for weighted untimed automata, the equality $\Rec(\Sigma, \M) = \nSeq(\Sigma, \M)$ does not always hold: let $\Sigma$, $\M$, $\mathcal L$ and $r$ be defined as in Example \ref{Ex:Bad}. Then, one can show that $r \cap \mathcal L \in \nSeq(\Sigma, \M) \setminus \Rec(\Sigma, \M)$. 

The proof of Theorem \ref{Theorem:Nivat} is based on the closure properties of WTA (cf. Sect. \ref{Sect:Closure}) and the following lemma. 

\begin{lemma}
\label{Lemma:Transitions}
Let $\Sigma$ be an alphabet and $\M$ a timed valuation monoid. Then, 
$\Rec(\Sigma, \M) \subseteq \Seq(\Sigma, \M)$.
\end{lemma}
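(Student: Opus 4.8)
The plan is to encode accepting runs of a given WTA directly into the edge alphabet, so that the transition structure becomes sequential by construction while the weights are read off positionally by $g$. Fix a WTA $\A = (L, C, I, E, F, \wt)$ with $||\A|| = \mathbb L$. We may assume $E \neq \emptyset$; otherwise $\A$ has no runs, $\mathbb L$ is the constant $\zero$ QTL, and we obtain it as $h((\val \circ g) \cap \mathcal L)$ by taking $\Gamma = \Sigma$, $h = \mathrm{id}$, $g$ arbitrary and $\mathcal L = \emptyset$ (which is sequentially recognizable). In the main case I set $\Gamma = E$, which is an alphabet, let $h = \Label : E \to \Sigma$, and for an edge $e = (\ell, a, \phi, \Lambda, \ell') \in E$ write $s(e) = \ell$ for its source location and put $g(e) = (\wt(s(e)), \wt(e)) \in M \times M$. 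Finally I let $\mathcal L \subseteq \mathbb T E^+$ consist of exactly those timed words $v = (e_1, t_1) \ldots (e_n, t_n)$ whose edges, read with the delays $t_i$, spell out a run of $\A$ in the sense of (\ref{Eq:DefRun}).

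To see that $\mathcal L$ is sequentially recognizable I would build a timed automaton $\mathcal B$ over $E$ from $\A$ by relabelling each edge with its own name: replace $e = (\ell, a, \phi, \Lambda, \ell')$ by $(\ell, e, \phi, \Lambda, \ell')$, keeping $L$, $C$, $F$ and all guards and resets unchanged. Since the letter now \emph{is} the edge, and the edge determines its source, no two edges of $\mathcal B$ leaving a common location can share a letter; thus the transition condition for sequentiality holds automatically. The only remaining obstruction is the requirement $|I| = 1$. I would remove it by adjoining a fresh location $\star$ as the unique initial location and, for every $\A$-edge $e$ with $s(e) \in I$, adding a copy $(\star, e, \phi, \Lambda, \ell')$ while \emph{retaining} all original edges as internal edges. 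A run of $\mathcal B$ then begins by choosing its first edge out of $\star$ — which forces the original initial location $s(e_1) \in I$ with $\nu_0 = 0$ exactly as in $\A$ — and proceeds through the internal edges; keeping those edges intact is what lets initial locations be revisited mid-run. Since each letter still labels at most one edge per source, and $|\{\star\}| = 1$, the automaton $\mathcal B$ is sequential, and one checks $\mathcal L(\mathcal B) = \mathcal L$.

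It then remains to verify $\mathbb L = h((\val \circ g) \cap \mathcal L)$. The assignment $v \mapsto \rho_v$ sending an edge-timed-word in $\mathcal L$ to the run it spells is a bijection from $\{v \in \mathcal L : h(v) = w\}$ onto $\Run_{\A}(w)$, because a run is completely determined by its sequence of edges and delays (the locations and clock valuations being forced, starting from $\nu_0 = 0$). For $v \in \mathcal L$ the definitions of $g$ and $\wt^{\sharp}$ give $g(v) = \wt^{\sharp}(\rho_v)$, hence $(\val \circ g)(v) = \val(\wt^{\sharp}(\rho_v)) = \wt_{\A}(\rho_v)$. Summing over the fibre $h^{-1}(w)$ and discarding the $v \notin \mathcal L$, which contribute $\zero$, yields $h((\val \circ g) \cap \mathcal L)(w) = \sum (\wt_{\A}(\rho) \mid \rho \in \Run_{\A}(w)) = ||\A||(w)$ for every $w \in \mathbb T \Sigma^+$, as required.

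The one genuinely delicate point is the passage to a single initial location: a naive redirection of the initial edges to emanate from $\star$ would destroy runs that pass through an initial location a second time, so it is essential to \emph{duplicate} rather than move these edges. Everything else — that relabelling by edge names makes the guards deterministic for free, and that runs biject with accepted edge-words — is straightforward bookkeeping.
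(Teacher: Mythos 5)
Your proof is correct and follows essentially the same route as the paper's sketch: take $\Gamma = E$, let $h$ be the edge-labelling, let $g$ pair the source-location weight with the edge weight, and let $\mathcal L$ be the timed language of edge sequences that spell out accepting runs. The paper merely asserts that this $\mathcal L$ is sequentially recognizable and that $||\A|| = h((\val \circ g) \cap \mathcal L)$; your explicit construction with the fresh initial location $\star$ (correctly duplicating rather than redirecting the edges out of initial locations) and the bijection between runs and edge-words soundly supplies the details the sketch omits.
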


\begin{proof}[Sketch]
Let $\A = (L, C, I, E, F, \wt)$ be a WTA over $\Sigma$ and $\M$. Let $\Gamma = E$. We define the mappings $h: \Gamma \to \Sigma$ and ${g: \Gamma \to M \times M}$ for all $\gamma = (\ell, a, \phi, \Lambda, \ell') \in \Gamma$ by $h(\gamma) = a$ and $g(\gamma) = (\wt(\ell), \wt(\gamma))$. Let $\mathcal L$ be the set of all timed words $w = (\gamma_1, \tau_1) ... (\gamma_n, \tau_n)$ such that there exists a run $\rho$ of $\A$ of the form (\ref{Eq:DefRun}) with $\gamma_i = e_i$ and $\tau_i = t_i$ for all $1 \le i \le n$. It can be shown that $\mathcal L$ is sequentially recognizable and $||\A|| = h((\val \circ g) \cap \mathcal L) \in \Seq(\Sigma, \M)$. \qed
\end{proof}

 Let $\Sigma$ be an alphabet and $\M$ a timed valuation monoid with the domain $M$. Let $\Unambb(\Sigma, \M)$ denote the collection of all QTL ${\mathbb L: \mathbb T \Sigma^+ \to M}$ over $\M$ such that there exist an alphabet $\Gamma$, a mapping $h: \Gamma \to \Sigma$ and an unambiguously recognizable QTL $r: \mathbb T \Gamma^+ \to M$ over $\M$ such that $\mathbb L = h(r)$. 
The collections $\Seqq(\Sigma, \M)$ and $\Dett(\Sigma, \M)$ are defined like $\Unambb(\Sigma, \M)$ with the only difference that $r$ is sequentially resp. deterministically recognizable.

As a corollary from Theorem 5.1, we establish the following connections between recognizable and unambiguously, sequentially and deterministically recognizable QTL. For the proof of this corollary, we apply Theorem \ref{Theorem:Nivat} and closure properties of WTA considered in Sect. \ref{Sect:Closure}. 
\begin{corollary} 
\label{Cor:Unamb}
Let $\Sigma$ be an alphabet and $\M$ a timed valuation monoid. Then, $\Seqq(\Sigma, \M) = \Dett(\Sigma, \M) \subseteq \Unambb(\Sigma, \M) = \Rec(\Sigma, \M)$. If $\M$ is location-independent, then $\Seqq(\Sigma, \M) = \Rec(\Sigma, \M)$.
\end{corollary}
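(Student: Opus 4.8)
The plan is to establish the chain
$\Seqq(\Sigma,\M) = \Dett(\Sigma,\M) \subseteq \Unambb(\Sigma,\M) = \Rec(\Sigma,\M)$ and then the refinement under location-independence, disposing of the elementary relations first and isolating the one construction that does the real work. Three of the relations follow for free from the automata hierarchy together with the closure lemmas. For $\Unambb(\Sigma,\M) = \Rec(\Sigma,\M)$, the inclusion $\subseteq$ is immediate from Lemma \ref{Lemma:Homo}, since an unambiguously recognizable QTL $r$ is in particular recognizable and hence so is $h(r)$; for $\supseteq$ I would start from $\mathbb L \in \Rec(\Sigma,\M) = \Unamb(\Sigma,\M)$ by Theorem \ref{Theorem:Nivat}, write $\mathbb L = h((\val\circ g)\cap\mathcal L)$ with $\mathcal L$ unambiguously recognizable, observe that $\val\circ g$ is unambiguously recognizable by Lemma \ref{Lemma:Comp}, and conclude with the unambiguous case of Lemma \ref{Lemma:Inter} that $(\val\circ g)\cap\mathcal L$ is unambiguously recognizable, so $\mathbb L \in \Unambb(\Sigma,\M)$. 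Next, $\Dett(\Sigma,\M) \subseteq \Unambb(\Sigma,\M)$ holds because every deterministic WTA admits at most one run per timed word and is therefore unambiguous. Finally $\Seqq(\Sigma,\M) \subseteq \Dett(\Sigma,\M)$ holds because sequentiality is a strong form of determinism: a sequential WTA has $|I| = 1$ and at most one edge per (location, label) pair, so vacuously no two \emph{distinct} such edges have jointly satisfiable guards, whence it is deterministic.

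The heart of the proof is the reverse inclusion $\Dett(\Sigma,\M) \subseteq \Seqq(\Sigma,\M)$, which I would prove by a direct relabeling construction in the spirit of Lemma \ref{Lemma:Transitions}. Given $\mathbb L = h(r)$ with $r = ||\A||$ for a deterministic WTA $\A = (L,C,I,E,F,\wt)$ over $\Gamma$, I form the WTA $\mathcal B$ over the alphabet $E$ obtained from $\A$ by relabeling each edge $e = (\ell,a,\phi,\Lambda,\ell')$ by $e$ itself, keeping $L, C, I, F$ and $\wt$ unchanged. Since the label of an edge of $\mathcal B$ now determines that edge uniquely, $\mathcal B$ has at most one edge per (location, label) pair, and determinism of $\A$ gives $|I| = 1$; hence $\mathcal B$ is sequential and $||\mathcal B||$ is a sequentially recognizable QTL over $E$. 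Writing $h'' : E \to \Gamma$, $h''(e) = \Label(e)$, I would check that the runs of $\mathcal B$ over edge-words projecting under $h''$ to a fixed $w$ are exactly the runs of $\A$ over $w$, with equal weights, so that $h''(||\mathcal B||) = ||\A|| = r$. Composing homomorphisms then yields $\mathbb L = h(r) = (h\circ h'')(||\mathcal B||) \in \Seqq(\Sigma,\M)$.

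For the location-independent refinement $\Seqq(\Sigma,\M) = \Rec(\Sigma,\M)$, the inclusion $\subseteq$ follows from the chain already established. For $\supseteq$ I would take $\mathbb L \in \Rec(\Sigma,\M) = \Seq(\Sigma,\M)$ by Theorem \ref{Theorem:Nivat} and write $\mathbb L = h((\val\circ g)\cap\mathcal L)$ with $\mathcal L$ sequentially recognizable; now location-independence makes $\val\circ g$ sequentially recognizable by Lemma \ref{Lemma:Comp}, and the sequential case of Lemma \ref{Lemma:Inter} gives that $(\val\circ g)\cap\mathcal L$ is sequentially recognizable, whence $\mathbb L \in \Seqq(\Sigma,\M)$. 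The main obstacle, and precisely the reason location-independence is genuinely needed here but not for $\Dett \subseteq \Seqq$, lies in the handling of initial locations: the relabeling construction only produces a sequential automaton directly when $|I| = 1$, and collapsing several initial locations into one fresh location would overwrite the weight of the first location of each run — harmless exactly when $\val$ ignores the location (first) components, i.e. under location-independence. Accordingly, I expect the most delicate routine point to be the weight bookkeeping in the relabeling construction, namely verifying $h''(||\mathcal B||) = ||\A||$.
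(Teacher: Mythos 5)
Your proof is correct and matches the route the paper indicates (the paper gives no details beyond saying the corollary follows from Theorem~\ref{Theorem:Nivat} and the closure properties of Section~\ref{Sect:Closure}): you obtain $\Unambb(\Sigma,\M)=\Rec(\Sigma,\M)$ from Lemma~\ref{Lemma:Homo} in one direction and from Theorem~\ref{Theorem:Nivat} combined with Lemmas~\ref{Lemma:Comp} and~\ref{Lemma:Inter} in the other, the hierarchy inclusions from sequential $\Rightarrow$ deterministic $\Rightarrow$ unambiguous, and the location-independent case from the sequential parts of Lemmas~\ref{Lemma:Comp} and~\ref{Lemma:Inter}. The one step the stated lemmas alone do not yield, namely $\Dett(\Sigma,\M)\subseteq\Seqq(\Sigma,\M)$, you settle by the direct edge-relabeling construction in the spirit of Lemma~\ref{Lemma:Transitions}, which is sound (the relabeled automaton is sequential precisely because an edge's label now determines the edge and determinism supplies $|I|=1$, and the run bijection preserves weights), and you correctly identify why location-independence is needed for $\Rec(\Sigma,\M)\subseteq\Seqq(\Sigma,\M)$ but not for this step.
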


However, the equality $\Seqq(\Sigma, \M) = \Rec(\Sigma, \M)$ does not always hold. Let ${\Sigma = \{a, b\}}$ and $\M = \M^{\text{sum}}$ be the timed valuation monoid as in Example \ref{Ex:TVM} (a); note that $\M$ is not location-independent. Consider the QTL ${\mathbb L: \mathbb T \Sigma^+ \to M}$ over $\M$ defined for all $w = (a_1, t_1) ... (a_n, t_n)$ by $\mathbb L(w) = t_1$ if $a_1 = a$ and $\mathbb L(w) = 2 \cdot t_1$ otherwise. We can show that $\mathbb L \in \Rec(\Sigma, \M) \setminus \Seqq(\Sigma, \M)$. 

\section{Weighted Relative Distance Logic}

In this section, we develop a weighted relative distance logic. Relative distance logic on timed words was introduced by Wilke in \cite{Wil94, Wil942}. It was shown that restricted relative distance logic and timed automata have the same expressive power. Here, we will derive a weighted version of this result. We will show that the proof of our result can be deduced from Wilke's result and our Nivat theorem for WTA.

We fix a countable set $V_1$ of {\em first-order variables} and a countable set $V_2$  of {\em second-order variables} such that $V_1 \cap V_2 = \emptyset$. Let $V = V_1 \cup V_2$.

\subsection{Relative Distance Logic}

Let $\Sigma$ be an alphabet. The set $\Rdl(\Sigma)$ of {\em relative distance formulas} over $\Sigma$ is defined by the grammar:
$$
\varphi \; ::= \; P_a(x) \; | \; x \le y \; | \; X(x) \; | \; \dd_{\leftarrow}^{{\bowtie} c}(X, x) \; | \;  \lnot \varphi \; | \; \varphi \vee \varphi \; | \; \exists x. \varphi \; | \; \exists X. \varphi
$$
where $a \in \Sigma$, $x, y \in V_1$, $X \in V_2$, ${\bowtie} \in \{<, \le, =, \ge, >\}$ and $c \in \mathbb N$.
The formulas of the form $\dd_{\leftarrow}^{{\bowtie c}}(X, x)$ are called {\em past formulas}. 

Let $w = (a_1, t_1) ... (a_n, t_n) \in \mathbb T \Sigma^+$ be a timed word. For every $1 \le i \le n$, let $\langle w \rangle_i = t_1 + ... + t_i$. The {\em domain} of $w$ is the set $\dom(w) = \{1, ..., n\}$ of {\em positions} of $w$. Let $y \in \dom(w)$, $Y \subseteq \dom(w)$, ${\bowtie} \in \{<, \le, =, \ge, >\}$ and $c \in \mathbb N$. Then, we write $\dd^{{\bowtie} c, w}_{\leftarrow}(Y, y)$ iff either there exists a position $z \in Y$ such that $z < y$ and, for the greatest such position $z$, $\langle w \rangle_y - \langle w \rangle_z \bowtie c$, or there exists no position $z \in Y$ with $z < y$, and $\langle w \rangle_y \bowtie c$.
A {\em $w$-assignment} is a mapping $\sigma: V \to \dom(w) \cup 2^{\dom(w)}$ such that $\sigma(V_1) \subseteq \dom(w)$ and $\sigma(V_2) \subseteq 2^{\dom(w)}$. We define the {\em update} $\sigma[x/i]$ to be the $w$-assignment such that $\sigma[x/i](x) = i$ and $\sigma[x/i](y) = \sigma(y)$ for all $y \in V \setminus \{x\}$. Similarly, for $X \in V_2$ and $I \subseteq \dom(w)$, we define the update $\sigma[X/I]$. Let $\varphi \in \Rdl(\Sigma)$ and $\sigma$ be a $w$-assignment. The definition that the pair $(w, \sigma)$ {\em satisfies} the formula $\varphi$, written $(w, \sigma) \models \varphi$, is given inductively on the structure of $\varphi$ as usual for MSO logic where, for the new formulas $\dd^{{\bowtie} c}_{\leftarrow}(X, x)$, we put $(w, \sigma) \models \dd^{{\bowtie} c}_{\leftarrow}(X, x)$ iff $\dd_{\leftarrow}^{{\bowtie} c, w}(\sigma(X), \sigma(x))$.

A formula $\varphi \in \Rdl(\Sigma)$ is called a {\em sentence} if every variable occurring in $\varphi$ is bound by a quantifier. Note that, for a sentence $\varphi \in \Rdl(\Sigma)$, the relation $(w, \sigma) \models \varphi$ does not depend on $\sigma$, i.e., for any $w$-assignments $\sigma_1, \sigma_2$, $(w, \sigma_1) \models \varphi$ iff ${(w, \sigma_2) \models \varphi}$. Then, we will write $w \models \varphi$. For a sentence $\varphi \in \Rdl(\Sigma)$, let ${\mathcal L(\varphi) = \{w \in \mathbb T \Sigma^+ \; | \; w \models \varphi\}}$, the timed language {\em defined} by $\varphi$. 
Let $\Delta \subseteq \Rdl(\Sigma)$. We say that a timed language $\mathcal L \subseteq \mathbb T \Sigma^+$ is {\em $\Delta$-definable} if there exists a sentence $\varphi \in \Delta$ such that $\mathcal L(\varphi) = \mathcal L$.

Let $\V = \{X_1, ..., X_m\} \subseteq V$ with $|\V| =  m$. For $\varphi \in \Rdl(\Sigma)$, let
$\exists \V. \varphi$ denote the formula $\exists X_1. \; ... \; \exists X_m. \varphi$. For a formula $\varphi \in \Rdl(\Sigma)$, let $\mathcal D(\varphi) \subseteq V_2$ denote the set of all variables $X$ for which there exist $x \in V_1$, ${\bowtie} \in \{<, \le, =, \ge, >\}$ and $c \in \mathbb N$ such that $\dd_{\leftarrow}^{\bowtie c}(X, x)$ is a subformula of $\varphi$. Let $\Rdl^{\leftarrow}(\Sigma) \subseteq \Rdl(\Sigma)$ denote the set of all formulas $\varphi$ where quantification of second-order variables is applied only to variables not in $\mathcal D(\varphi)$. We denote by $\exists \Rdl^{\leftarrow}(\Sigma) \subseteq \Rdl(\Sigma)$ the set of all sentences of the form $\exists \mathcal D(\varphi). \varphi$.

\begin{theorem}[Wilke \cite{Wil942}]
\label{Theorem:Wilke}
Let $\Sigma$ be an alphabet and $\mathcal L \subseteq \mathbb T \Sigma^+$ a timed language. Then, $\mathcal L$ is recognizable iff $\mathcal L$ is $\exists \Rdl^{\leftarrow}(\Sigma)$-definable.
\end{theorem}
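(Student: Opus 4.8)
The statement is a Büchi--Elgot--Trakhtenbrot-style equivalence, so the plan is to prove the two inclusions separately. The direction \emph{recognizable} $\Rightarrow$ \emph{definable} is the routine one; the substantial direction is the converse, where the restriction to $\Rdl^{\leftarrow}(\Sigma)$ turns out to be exactly what is needed to avoid complementing timed automata (which is impossible in general).

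For \emph{recognizable} $\Rightarrow$ \emph{definable}, I would fix a timed automaton $\mathcal A = (L, C, I, E, F)$ and encode an accepting run by second-order variables. Concretely, I use variables $(X_e)_{e \in E}$ to guess, for each position $i$, the edge taken, and for each clock $x \in C$ a variable $R_x$ collecting the positions at which $x$ is reset. The body $\varphi$ asserts in ordinary MSO-fashion that the $X_e$ partition the positions, that the guessed edges compose into a path from $I$ to $F$, that $P_a(i)$ agrees with the label of the chosen edge, and that $i \in R_x$ holds iff the chosen edge resets $x$. The only genuinely timed requirement---that each guard $x \bowtie c$ of the edge chosen at $i$ is satisfied---is expressed by the past formula $\dd_{\leftarrow}^{{\bowtie} c}(R_x, i)$: the clock value $(\nu_{i-1} + t_i)(x)$ tested at $i$ equals $\langle w \rangle_i - \langle w \rangle_z$ for the greatest reset position $z < i$ (or $\langle w \rangle_i$ if none), which is precisely the relative-distance semantics. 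The key point is that the distance variables are exactly the $R_x$, they are quantified only at the outermost level, and the edge variables $X_e$, quantified inside, never occur in a past formula; hence $\exists (R_x)_{x \in C}. \varphi$ lies in $\exists \Rdl^{\leftarrow}(\Sigma)$.

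For the converse, let $\exists \mathcal D(\varphi). \varphi$ with $\varphi \in \Rdl^{\leftarrow}(\Sigma)$ be given and write $\mathcal D(\varphi) = \{X_1, \dots, X_k\}$. The idea is to separate the logical content from the timing. First I replace every past formula $\dd_{\leftarrow}^{{\bowtie} c}(X_j, x)$ occurring in $\varphi$ by a fresh monadic predicate, obtaining a pure MSO formula $\hat\varphi$ over an expanded alphabet that records, at each position, the membership bits for $X_1, \dots, X_k$ together with the truth bits of the introduced predicates. Since $\hat\varphi$ is ordinary MSO, the classical Büchi theorem yields a finite automaton $\mathcal B$ for its models. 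I then turn $\mathcal B$ into a timed automaton by introducing one clock $c_j$ per variable $X_j$, resetting $c_j$ exactly when the $X_j$-bit is set, and forcing each bit for $\dd_{\leftarrow}^{{\bowtie} c}(X_j, \cdot)$ to agree with the guard $c_j \bowtie c$; where a bit must be forced to $0$ I split the edge to realize the negation of $c_j \bowtie c$ by finitely many positive clock constraints. Finally, projecting away the membership and predicate bits realizes the outer $\exists \mathcal D(\varphi)$ and returns a timed automaton for $\mathcal L(\exists \mathcal D(\varphi). \varphi)$.

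The main obstacle---and the reason the theorem restricts to $\Rdl^{\leftarrow}(\Sigma)$---is that an inductive Büchi argument needs closure under complement for negation, which timed automata do not enjoy. The restriction that distance variables be quantified only at the top circumvents this: all Boolean combinations and second-order quantification of the non-distance variables happen at the level of $\hat\varphi$, i.e.\ in the untimed world, where complementation is available, and the timing is reattached only at the very end, in the passage from $\mathcal B$ to a timed automaton---a step monotone in the guards that needs no complementation. The delicate points to verify are that the replacement-then-realization procedure is semantically faithful (each predicate bit is forced to coincide with the relative-distance semantics through the corresponding clock) and that the outermost projection introduces no spurious behavior, for which the freedom of $\mathcal B$ to be nondeterministic is harmless.
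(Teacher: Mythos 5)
This statement is Wilke's theorem, which the paper imports from \cite{Wil942} without giving any proof of its own, so there is no in-paper argument to compare against; your sketch is a reconstruction of Wilke's original argument, and it is essentially the canonical one. Both directions are set up correctly: the encoding of runs by edge variables $X_e$ and reset-position variables $R_x$, with guards expressed via $\dd_{\leftarrow}^{{\bowtie} c}(R_x, i)$, does place the resulting sentence in $\exists \Rdl^{\leftarrow}(\Sigma)$ precisely because only the $R_x$ occur in past formulas and they sit under the outermost existential block; and for the converse, abstracting the past formulas into fresh unary predicates, running the classical B\"uchi construction in the untimed world, and only then reattaching clocks (one per distance variable, reset exactly at the positions in that variable) is exactly how the restriction to $\Rdl^{\leftarrow}$ is exploited to avoid complementing timed automata. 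You also correctly flag the two delicate points: the guard-before-reset convention matching the strict inequality $z<y$ in the relative-distance semantics, and the need to split edges to express $\lnot(x=c)$ as a disjunction of positive constraints. I see no gap; the only caveat is that this is a proof of a cited background result rather than of anything the paper itself establishes.
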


\subsection{Weighted Relative Distance Logic}

\label{SSect:wRdl}

In this subsection, we consider a weighted version of relative distance logic. For untimed words, weighted MSO logic over semirings was defined in \cite{DG07}.  A weighted MSO logic over (untimed) product valuation monoids was considered in \cite{DM12}. We will use a similar approach to define the syntax and the semantics of our weighted relative distance logic. In \cite{DM12}, valuation monoids were augmented with a product operation and a unit element to define the semantics of weighted formulas. Here, we proceed in a similar way and consider timed {\em product} valuation monoids. 

A {\em timed product valuation monoid} (timed pv-monoid) $\M = (M, +, \val, \diamond, \zero, \one)$ is a timed valuation monoid $(M, +, \val, \zero)$ equipped with a multiplication $\diamond: M \times M \to M$ and a unit $\one \in M$ such that $m \diamond \one = \one \diamond m = m$ and $m \diamond \zero = \zero \diamond m = \zero$ for all $m \in M$, $\val(((\one, \one), t_1), ..., ((\one, \one), t_n)) = \one$ for all $n \ge 1$ and all $t_1, ..., t_n \in \Rp$, and $\val(((m_1, m_1'), t_1) ... ((m_n, m_n'), t_n)) = \zero$ whenever $m'_i = \zero$ for some $1 \le i \le n$. We say that $\M$ is {\em idempotent} if $+$ is idempotent.

\begin{example}
\label{Ex:TPVM}
If we augment the timed valuation monoids $\M^{\text{sum}}$, $\M^{\text{avg}}$ and $\M^{\text{disc}_{\lambda}}$ from Example \ref{Ex:TVM} with the multiplication $\diamond = +$ and the unit $\one = 0$, then we obtain the timed pv-monoids $\M_0^{\text{sum}}$, $\M_0^{\text{avg}}$ and $\M_0^{\text{disc}_{\lambda}}$. Note that these timed pv-monoids are idempotent.
\end{example}

Motivated by the examples, for the clarity of presentation, we restrict ourselves to idempotent timed pv-monoids. 

Let $\Sigma$ be an alphabet and $\M = (M, +, \val, \diamond, \zero, \one)$ a timed pv-monoid. The set $\wRdl(\Sigma, \M)$ of formulas of {\em weighted relative distance logic} over $\Sigma$ and $\M$ is defined by the grammar
$$
\varphi \; ::= \; \mathbb B. \beta \; | \; m \; | \; \varphi \vee \varphi \; | \; \varphi \wedge \varphi \; | \; \exists x. \varphi \; | \; \forall x. (\varphi, \varphi) \; | \; \exists X. \varphi
$$
where $\beta \in \Rdl^{\leftarrow}(\Sigma)$, $m \in M$, $x \in V_1$ and $X \in V_2$; the notation $\mathbb B. \beta$ indicates that here $\beta$ will be interpreted in a quantitative way.

Let $\mathbb T \Sigma^+_V$ denote the set of all pairs $(w, \sigma)$ where $w \in \mathbb T \Sigma^+$ and $\sigma$ is a $w$-assignment. For $\varphi \in \wRdl(\Sigma, \M)$, the {\em semantics} of $\varphi$ is the mapping $[\![\varphi]\!]: \mathbb T \Sigma^+_V \to M$ defined for all $(w, \sigma) \in \mathbb T \Sigma^+_V$ with $w = (a_1, t_1) ... (a_n, t_n)$ inductively on the structure of $\varphi$ as shown in Table \ref{Table:Semantics}. 
\begin{table}[t]
\begin{scriptsize}
\begin{tabular}{@{\hspace{0.42cm}}l@{\hspace{1.3cm}}l}
$
\begin{array}{rll}
\! [\![\mathbb B. \beta]\!](w, \sigma) & = & \begin{cases} \one, & \text{if } (w, \sigma) \models \beta, \\
\zero, & \text{otherwise}
\end{cases} \\
\! [\![m]\!](w, \sigma) & = & m \\
\! [\![\varphi_1 \vee \varphi_2]\!](w, \sigma) & = & [\![\varphi_1]\!](w, \sigma) + [\![\varphi_2]\!](w, \sigma) \\
\end{array}
$ 
& 
$
\begin{array}{rll}
\! [\![\varphi_2 \wedge \varphi_2]\!](w, \sigma) & = & [\![\varphi_1]\!](w, \sigma) \diamond [\![\varphi_2]\!](w, \sigma) \\
\! [\![\exists x. \varphi]\!](w, \sigma) & = & \sum\limits_{i \in \dom(w)} [\![\varphi]\!](w, \sigma[x/i]) \\
\! [\![\exists X. \varphi]\!](w, \sigma) & = & \sum\limits_{I \subseteq \dom(w)} [\![\varphi]\!](w, \sigma[X/I])
\end{array}
$
\end{tabular}
$
\! [\![\forall x. (\varphi_1, \varphi_2)]\!](w, \sigma) = \val[(([\![\varphi_1]\!](w, \sigma[x/i]), [\![\varphi_2]\!](w, \sigma[x/i])), t_i)]_{i \in \dom(w)} 
$
\end{scriptsize}
\medskip
\caption{The semantics of weighted relative distance logic}
\label{Table:Semantics}
\end{table}
Here, $x \in V_1$, $X \in V_2$, $\beta \in \Rdl^{\leftarrow}(\Sigma)$, $m \in M$ and ${\varphi, \varphi_1, \varphi_2 \in \wRdl(\Sigma, \M)}$.

\begin{remark}
In \cite{Qua10, Qua11}, Quaas introduced a weighted version of relative distance logic over a semiring $\mathbb S = (S, +, \cdot, \zero, \one)$ and a family of functions $\mathcal F \subseteq S^{\Rp}$ where elements of $S$ model discrete weights and functions $f \in \mathcal F$ model continuous weights.
If $\mathcal F$ is a one-parametric family of functions $(f_s)_{s \in S}$, then our weighted logic incorporates the logic of Quaas over $\mathbb S$ and $\mathcal F$. However, for more complicated timed valuation functions (like average and discounting) we must have formulas which combine both discrete and continuous weights. Therefore, we use the formulas $\forall x. (\varphi_1, \varphi_2)$. Our approach also extends the idea of \cite{DM12} to define the semantics of formulas with a first-order universal quantifier using the valuation function. 
\end{remark}

\begin{example}
Let $\Sigma = \{a, b\}$, $C(a), C(b) \in \mathbb R$ be the {\em continuous costs} of $a,b$ and $D(a), D(b) \in \mathbb R$ the {\em discrete costs}. Given a timed word ${w = (\gamma_1, t_1) ... (\gamma_n, t_n) \in \mathbb T \Sigma^+}$, the {\em average cost} of $w$ is defined as $A(w) = \frac{\sum_{i = 1}^n (C(\gamma_i) \cdot t_i + D(\gamma_i))}{\sum_{i=1}^n t_i}$. Let $\M_0^{\text{avg}}$ be defined as in Example \ref{Ex:TPVM}. For $U \in \{C, D\}$, let $\varphi_U(x) = (P_a(x) \wedge U(a)) \vee (P_b(x) \wedge U(b))$. Consider the 
$\wRdl(\Sigma, \M_0^{\text{avg}})$-sentence $\varphi = \forall x. (\varphi_C(x), \varphi_D(x))$. Then, for all ${w \in \mathbb T \Sigma^+}$, we have: $[\![\varphi]\!](w) = A(w)$.
\end{example}

A sentence $\varphi \in \wRdl(\Sigma, \M)$ is defined as usual as a formula without free variables. Then, for every sentence $\varphi \in \wRdl(\Sigma, \M)$, every timed word $w \in \mathbb T \Sigma^+$ and every $w$-assignment $\sigma$, the value $[\![\varphi]\!](w, \sigma)$ does not depend on $\sigma$. Hence, we can consider the semantics of $\varphi$ as a quantitative timed language $[\![\varphi]\!]: \mathbb T \Sigma^+ \to M$ over $\M$. 

Similarly to the results of \cite{DG07}, in general weighted relative distance logic and WTA are not expressively equivalent. We can show that the QTL $\mathbb L: \mathbb T \Sigma^+ \to \mathbb R \cup \{\infty\}$ with $\mathbb L(w) = |w|^2$ is not recognizable over the timed valuation monoid $\M^{\text{sum}}$. But this QTL is defined by the $\wRdl(\Sigma, \M^{\text{sum}}_0)$-sentence $\forall x. (0, \forall y. (0, 1))$. 

Nevertheless, there is a syntactically restricted fragment of weighted relative distance logic which is expressively equivalent to WTA. Let $\Sigma$ be an alphabet and $\M = (M, +, \val, \diamond, \zero, \one)$ an idempotent timed pv-monoid. A formula $\varphi \in \wRdl(\Sigma, \M)$ is called {\em almost boolean} if it is  built from boolean formulas $\mathbb B. \beta \in \Rdl^{\leftarrow}(\Sigma, \M)$ and constants $m \in M$ using disjunctions and conjunctions. We say that a formula $\varphi$ is {\em syntactically restricted} if whenever it contains a subformula $\forall x. (\varphi_1, \varphi_2)$, then $\varphi_1, \varphi_2$ are almost boolean; whenever it contains a subformula $\varphi_1 \wedge \varphi_2$, then either $\varphi_1, \varphi_2$ are almost boolean or $\varphi_1 = \mathbb B. \varphi'$ or $\varphi_2 = \mathbb B. \varphi'$ with $\varphi' \in \Rdl^{\leftarrow}(\Sigma)$; every constant $m \in M$ is in the scope of a first-order universal quantifier. Let $\Def^{\text{res}}(\Sigma, \M)$ denote the collection of all QTL $\mathbb L: \mathbb T \Sigma^+ \to M$ over $\M$ such that $\mathbb L = [\![\varphi]\!]$ for some syntactically restricted $\wRdl(\Sigma, \M)$-sentence $\varphi$. 

Our main result for weighted relative distance logic is the following theorem. 

\begin{theorem}
\label{Thm:Rec_Eq_wRdl}
Let $\Sigma$ be an alphabet and $\M$ an idempotent timed pv-monoid. Then, $\Def^{\text{\rm res}}(\Sigma, \M) = \Rec(\Sigma, \M)$.
\end{theorem}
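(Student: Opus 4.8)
The plan is to establish the two inclusions $\Rec(\Sigma,\M)\subseteq\Def^{\text{res}}(\Sigma,\M)$ and $\Def^{\text{res}}(\Sigma,\M)\subseteq\Rec(\Sigma,\M)$ separately, routing each through the Nivat theorem (Theorem~\ref{Theorem:Nivat}) and Wilke's theorem (Theorem~\ref{Theorem:Wilke}) rather than constructing weighted timed automata directly from formulas. Since $\M$ is an idempotent timed pv-monoid, its underlying timed valuation monoid is idempotent, so Theorem~\ref{Theorem:Nivat} yields $\Rec(\Sigma,\M)=\nSeq(\Sigma,\M)$: every recognizable QTL has the form $h((\val\circ g)\cap\mathcal L)$ with $\mathcal L$ recognizable. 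Idempotency of $+$ will be used repeatedly so that a sum over many witnesses collapses to a single value.

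For $\Rec(\Sigma,\M)\subseteq\Def^{\text{res}}(\Sigma,\M)$ I would start from such a representation $\mathbb L=h((\val\circ g)\cap\mathcal L)$ over an alphabet $\Gamma$ and translate its three ingredients into logic. First, $\val\circ g$ is captured over $\Gamma$ by the sentence $\forall x.(\varphi_1,\varphi_2)$ where, writing $g(\gamma)=(g_1(\gamma),g_2(\gamma))$, I set $\varphi_U(x)=\bigvee_{\gamma\in\Gamma}\big(\mathbb B.P_\gamma(x)\wedge g_U(\gamma)\big)$ for $U\in\{1,2\}$; both $\varphi_1,\varphi_2$ are almost boolean, so this is syntactically restricted and has semantics $\val\circ g$. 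Next, by Wilke's theorem write $\mathcal L=\mathcal L(\exists\mathcal D(\chi).\chi)$ with $\chi\in\Rdl^{\leftarrow}(\Gamma)$ and encode the intersection as $\exists X_1.\ldots\exists X_p.\big(\mathbb B.\chi\wedge\forall x.(\varphi_1,\varphi_2)\big)$: the valuation weight does not depend on the $X_j$, so by idempotency the sum over all witnessing assignments equals $(\val\circ g)(w)$ when $w\in\mathcal L$ and $\zero$ otherwise. Finally $h$ is realized by second-order existential quantification over $\Sigma$: introduce variables $(Y_\gamma)_{\gamma\in\Gamma}$ marking the refinement of each position, guard them by a boolean $\Rdl^{\leftarrow}(\Sigma)$-formula forcing a partition consistent with $h$, and replace each atom $P_\gamma(x)$ by $Y_\gamma(x)$. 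Since the guard puts valid markings in bijection with the refinements $v\in h^{-1}(w)$, the weighted quantifiers $\exists Y_\gamma$ sum exactly over these $v$, producing $\mathbb L$; the resulting sentence is syntactically restricted.

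For $\Def^{\text{res}}(\Sigma,\M)\subseteq\Rec(\Sigma,\M)$ I would induct on syntactically restricted formulas, working over extended alphabets $\Sigma_{\mathcal V}=\Sigma\times\{0,1\}^{\mathcal V}$ that encode the free-variable assignment and restricting throughout to the recognizable language of valid encodings. The atom $\mathbb B.\beta$ yields the QTL that is $\one$ on the language defined by $\beta$ (recognizable by Wilke's theorem relativized to free variables) and $\zero$ elsewhere; this is the constant-$\one$ QTL, recognizable by Lemma~\ref{Lemma:Comp} with $g\equiv(\one,\one)$ and the pv-monoid axiom $\val(((\one,\one),t_1)\ldots)=\one$, intersected with that language, hence recognizable by Lemma~\ref{Lemma:Inter}. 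A disjunction maps to the pointwise sum, recognizable via the disjoint union of weighted timed automata; a conjunction allowed with one side $\mathbb B.\varphi'$, $\varphi'\in\Rdl^{\leftarrow}(\Sigma)$, is an intersection with the recognizable language $\mathcal L(\varphi')$, again by Lemma~\ref{Lemma:Inter}; and $\exists x.\varphi$, $\exists X.\varphi$ are homomorphic images under the projections forgetting the marking of $x$ resp.\ $X$, recognizable by Lemma~\ref{Lemma:Homo}.

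The remaining and main case is $\forall x.(\varphi_1,\varphi_2)$ with $\varphi_1,\varphi_2$ almost boolean, and here I would build a Nivat representation once more. Enumerating the boolean subformulas $\beta_1,\dots,\beta_k$ occurring in $\varphi_1,\varphi_2$, the values $[\![\varphi_1]\!](w,\sigma[x/i])$ and $[\![\varphi_2]\!](w,\sigma[x/i])$ depend only on which $\beta_j$ hold at position $i$. I would take $\Gamma=\Sigma_{\mathcal V}\times\{0,1\}^k$, let $h$ forget the extra bits, let $g$ send a letter with bit-vector $b$ to the pair of weights obtained by substituting $\one$ and $\zero$ for the atoms according to $b$ and evaluating the induced expressions in $+$ and $\diamond$, and let $\mathcal L\subseteq\mathbb T\Gamma^+$ be the language of valid encodings whose bits correctly record the truth of $\beta_1,\dots,\beta_k$ at each position, which is definable in $\Rdl^{\leftarrow}$ and hence recognizable by Wilke. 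Then $h((\val\circ g)\cap\mathcal L)=[\![\forall x.(\varphi_1,\varphi_2)]\!]_{\mathcal V}$, which lies in $\nSeq(\Sigma_{\mathcal V},\M)=\Rec(\Sigma_{\mathcal V},\M)$ by Theorem~\ref{Theorem:Nivat}. The delicate point I expect to demand the most care is the passage between relative distance formulas with free variables and recognizable timed languages over extended alphabets: one must verify that recording the truth of the $\beta_j$ and expressing the partition and consistency guards stays inside $\Rdl^{\leftarrow}$, in particular without quantifying the second-order variables occurring in past formulas, so that Wilke's theorem applies.
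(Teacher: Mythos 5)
Your proposal is correct in substance, and its first half coincides with the paper's argument: for $\Rec(\Sigma,\M)\subseteq\Def^{\text{res}}(\Sigma,\M)$ you pass through the Nivat representation $h((\val\circ g)\cap\mathcal L)$ (Theorem~\ref{Theorem:Nivat}, idempotent case), apply Wilke's theorem to $\mathcal L$, and build the sentence with the partition guard, the consistency guard and a $\forall x.(\cdot,\cdot)$ carrying $g_1,g_2$, using idempotency to collapse the sums over witnessing assignments --- this is exactly the paper's proof of $\mathcal N^{\exists\Rdl^{\leftarrow}}(\Sigma,\M)\subseteq\Def^{\text{res}}(\Sigma,\M)$. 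For the converse the two routes genuinely differ. The paper does \emph{not} induct on formulas: it first normalizes every syntactically restricted sentence into a single \emph{canonical} sentence $\exists\V.\forall y.(\bigvee_i\mathbb B.\beta_i\wedge m_i,\bigvee_i\mathbb B.\beta_i\wedge m_i')$ with the $\beta_i$ mutually exclusive and exhaustive, and then performs one Nivat construction for that shape; this puts all second-order quantifiers, including those binding variables of past formulas, into the outermost prefix where $\exists\Rdl^{\leftarrow}$ permits them. Your structural induction over extended alphabets $\Sigma_{\mathcal V}$ is the \cite{DG07}-style alternative that the paper's closing remark explicitly mentions as workable but more technical: you must show at every stage that the languages of valid encodings and the languages $\mathcal L(\beta)$ for $\beta$ with free variables in $\mathcal D(\beta)$ are recognizable, which is precisely the delicate point you flag (it is resolvable by existentially re-quantifying each such $X$ together with a first-order formula tying $X$ to its alphabet bit, which keeps the sentence in $\exists\Rdl^{\leftarrow}$). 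Two smaller remarks: your case analysis for $\wedge$ omits conjunctions of two almost boolean formulas neither of which has the form $\mathbb B.\varphi'$; this is harmless, since such a subformula reached by the induction lies outside every $\forall$ and hence contains no constants, so by idempotency and the unit laws its semantics is $\{\zero,\one\}$-valued on an $\Rdl^{\leftarrow}$-definable language. Also, the normalization to canonical form is where most of the paper's work for this direction actually sits, so your version trades that single normalization for the bookkeeping of encodings spread over all induction cases --- the paper's route buys a cleaner separation between the logic-to-logic transformation and the single application of the Nivat theorem.
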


Now we give a sketch of the proof of this theorem.  Let $\mathcal N^{\exists \Rdl^{\leftarrow}}(\Sigma, \M)$ denote the collection of all QTL ${\mathbb L: \mathbb T \Sigma^+ \to M}$ over $\M$ such that there exist an alphabet $\Gamma$, mappings $h: \Gamma \to \Sigma$, $g: \Gamma \to M \times M$ and a $\exists \Rdl^{\leftarrow} (\Gamma)$-definable timed language $\mathcal L$ such that $\mathbb L = h((\val \circ g) \cap \mathcal L)$.
For the proof of Theorem \ref{Thm:Rec_Eq_wRdl}, we establish a Nivat-like characterization of definable QTL. 

\begin{theorem}
\label{Thm:LogicNivat}
Let $\Sigma$ be an alphabet and $\M$ an idempotent timed pv-monoid. Then, $\mathcal N^{\exists \Rdl^{\leftarrow}}(\Sigma, \M) = \Def^{\text{\rm res}}(\Sigma, \M)$.
\end{theorem}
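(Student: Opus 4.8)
The plan is to prove the two inclusions of Theorem \ref{Thm:LogicNivat} separately and constructively, using Wilke's Theorem \ref{Theorem:Wilke} to move freely between recognizable and $\exists\Rdl^{\leftarrow}$-definable timed languages, the closure of recognizable timed languages under union, intersection and inverse relabeling, and the pv-monoid axioms---especially that $\val$ of an all-$(\one,\one)$ word is $\one$, that $m\diamond\one=m$ and $m\diamond\zero=\zero$, together with the idempotency of $+$.

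For $\mathcal N^{\exists\Rdl^{\leftarrow}}(\Sigma,\M)\subseteq\Def^{\text{res}}(\Sigma,\M)$, let $\mathbb L=h((\val\circ g)\cap\mathcal L)$ with $\mathcal L=\mathcal L(\exists\mathcal D(\psi).\psi)$, $\psi\in\Rdl^{\leftarrow}(\Gamma)$ and $g(\gamma)=(m_\gamma^{(1)},m_\gamma^{(2)})$. First I would encode $\val\circ g$ over $\Gamma$ by $\forall x.(\varphi_1,\varphi_2)$ where $\varphi_j(x)=\bigvee_{\gamma\in\Gamma}(\mathbb B.P_\gamma(x)\wedge m_\gamma^{(j)})$; these are almost boolean, so the restriction on $\forall$ is met. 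The intersection with $\mathcal L$ and the second-order quantifiers $\mathcal D(\psi)=\{X_1,\dots,X_k\}$ are absorbed into $\exists X_1.\dots\exists X_k.(\mathbb B.\psi\wedge\forall x.(\varphi_1,\varphi_2))$: the conjunction has the admissible shape $\mathbb B.\psi\wedge(\cdots)$, and since $\varphi_1,\varphi_2$ do not mention the $X_j$, idempotency of $+$ collapses the sum over satisfying $X$-assignments to $\val(g(w))$ when $w\in\mathcal L$ and to $\zero$ otherwise. Finally the morphism $h$ is realised by a standard relabeling: introduce fresh $(X_\gamma)_{\gamma\in\Gamma}$, a boolean formula $\mathrm{Part}\in\Rdl^{\leftarrow}(\Sigma)$ forcing them to partition $\dom(w)$ compatibly with $h$, replace every $P_\gamma$ by membership in $X_\gamma$, and quantify the $X_\gamma$ existentially; idempotency makes the weighted sum over labelings equal $\sum_{h(v)=w}r(v)$. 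Inspecting the clauses shows the resulting sentence is syntactically restricted.

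For the converse $\Def^{\text{res}}(\Sigma,\M)\subseteq\mathcal N^{\exists\Rdl^{\leftarrow}}(\Sigma,\M)$ I would argue by induction on the structure of a syntactically restricted formula, treating free variables in the usual way over the extended alphabet $\Sigma_\mathcal V=\Sigma\times\{0,1\}^{\mathcal V}$ and intersecting throughout with the $\exists\Rdl^{\leftarrow}$-definable language of valid encodings. The closure properties of $\mathcal N^{\exists\Rdl^{\leftarrow}}$ I would establish are: closure under $+$ (disjoint union of label alphabets and union of the definable guessing languages), handling $\vee$; closure under morphisms, obtained by composing $h$ with the projection erasing the bit of a quantified variable, which handles $\exists x$ and $\exists X$; and closure under intersection with a recognizable language $K$, via the identities $h(r)\cap K=h(r\cap h^{-1}(K))$ and $(\val\circ g)\cap\mathcal L\cap h^{-1}(K)=(\val\circ g)\cap(\mathcal L\cap h^{-1}(K))$ with $\mathcal L\cap h^{-1}(K)$ again recognizable---this covers the admissible conjunction $\mathbb B.\varphi'\wedge\varphi_2$. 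The easy base case $\mathbb B.\beta$ is realised with $g\equiv(\one,\one)$, so that $\val\circ g\equiv\one$ by the pv-axiom, intersected with $\mathcal L(\beta)$. A constant-free almost boolean formula reduces, by idempotency, to a single $\mathbb B.\beta$, and by the restriction every genuine constant sits under a $\forall$.

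The \emph{main obstacle} is the remaining base case $\forall x.(\theta_1,\theta_2)$ with $\theta_1,\theta_2$ almost boolean, where the valuation semantics must be matched by $\val\circ g$. Since $\theta_1,\theta_2$ are almost boolean, the pair $([\![\theta_1]\!](w,\sigma[x/i]),[\![\theta_2]\!](w,\sigma[x/i]))$ depends on $(w,\sigma,i)$ only through which of the finitely many boolean subformulas $\mathbb B.\beta$ (with free $x$) hold at position $i$; there are only finitely many such types. I would take $\Gamma$ to be $\Sigma_\mathcal V$ enriched with a component recording a guessed type at each position, let $g$ map each type to its value pair in $M\times M$, and let $\mathcal L$ force the guessed type to agree everywhere with the true one---an agreement that stays in $\Rdl^{\leftarrow}(\Gamma)$ because each ``$\beta$ holds at $x=i$'' is an $\Rdl^{\leftarrow}$ formula in the free variable $x$. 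As the correct type labeling is unique per word, the projection $h$ erasing the type component makes $h((\val\circ g)\cap\mathcal L)(w)$ collapse to $\val[((\,\cdot\,),t_i)]_{i\in\dom(w)}=[\![\forall x.(\theta_1,\theta_2)]\!](w)$. Enumerating the types, checking that the consistency language remains in $\Rdl^{\leftarrow}$, and aligning $\val\circ g$ with the valuation semantics of $\forall$ is where the real content lies; the other inductive steps are routine bookkeeping over the encoding alphabet.
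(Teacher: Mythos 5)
Your proof of the inclusion $\mathcal N^{\exists \Rdl^{\leftarrow}}(\Sigma, \M) \subseteq \Def^{\text{\rm res}}(\Sigma, \M)$ is essentially the paper's: encode $\val \circ g$ by a single $\forall x.(\varphi_1,\varphi_2)$ with almost boolean disjunctions over the letters of $\Gamma$, absorb $\mathcal L$ via a conjunction with $\mathbb B.\beta$ (using idempotency to collapse the sum over the existentially quantified $\mathcal D$-variables), and realise $h$ by guessing a partition $(X_\gamma)_{\gamma\in\Gamma}$ constrained by a formula $\mathrm{Part}$ and a compatibility formula. This direction matches the paper step for step.

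For the converse you take a genuinely different route. The paper first proves a normal-form result: every syntactically restricted sentence is equivalent to a single \emph{canonical} sentence $\exists \V.\, \forall y. \big(\bigvee_{i=1}^k \mathbb B.\beta_i \wedge m_i,\ \bigvee_{i=1}^k \mathbb B.\beta_i \wedge m_i'\big)$ in which exactly one $\beta_i$ holds at each position, and then performs one translation into a Nivat decomposition over $\Gamma = \Sigma \times M^1_\varphi \times M^2_\varphi$. You instead argue compositionally, establishing closure of $\mathcal N^{\exists \Rdl^{\leftarrow}}(\Sigma,\M)$ under sum, relabelings and intersection with recognizable languages, and isolating $\forall x.(\theta_1,\theta_2)$ as the one nontrivial base case, which you handle by enumerating the finitely many ``types'' given by which boolean subformulas hold at a position and forcing a guessed type annotation to agree with the true one. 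That type analysis is exactly the content of the paper's canonical form (your types are the paper's mutually exclusive $\beta_i$), so the key idea is shared; the difference is where the combinatorial work is concentrated. Your organisation has the advantage of making explicit that $\mathcal N^{\exists \Rdl^{\leftarrow}}$ enjoys the same closure properties as $\Rec$ (as it must, given Theorem \ref{Theorem:Nivat}), and of localising the use of idempotency to the $\exists$-quantifiers and disjunction; the paper's normal form yields a more compact single construction. One point to make explicit in your write-up: the consistency language for the guessed types involves negations of the formulas $\beta$, so you should check that these negations stay inside $\Rdl^{\leftarrow}(\Gamma)$ with the same set of $\mathcal D$-variables (they do, since negation does not change $\mathcal D(\varphi)$ and the restriction on second-order quantification is preserved), so that the resulting language is still $\exists\Rdl^{\leftarrow}(\Gamma)$-definable and hence recognizable by Theorem \ref{Theorem:Wilke}. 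With that noted, the argument is sound.
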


\begin{proof}[Sketch]
To show the inclusion $\subseteq$, let $\mathbb L = h((\val \circ g) \cap \mathcal L)$ where $\Gamma$, $h$, $g$ and $\mathcal L$ are as in the definition of $\mathcal N^{\exists \Rdl^{\leftarrow}}(\Sigma, \M)$. Let $\beta$ be a $\exists \Rdl^{\leftarrow}(\Sigma)$-sentence defining $\mathcal L$. We introduce a family $\V = (X_{\gamma})_{\gamma \in \Gamma}$ of second-order variables not occurring in $\beta$. We replace each predicate $P_{\gamma}(x)$ with $\gamma \in \Gamma$ occurring in $\beta$ by the formula ${P_{h(\gamma)}(x) \wedge X_{\gamma}(x)}$; so we obtain a formula $\beta' \in \exists \Rdl^{\leftarrow}(\Sigma)$. Assume that $\beta' = \exists \mathcal D(\beta''). \beta''$ with $\beta'' \in \Rdl^{\leftarrow}(\Sigma)$. We construct a formula $\text{Part} \in \Rdl^{\leftarrow}(\Sigma)$ which demands that the variables $\V$ form a partition of the domain, and  a formula $H \in  \Rdl^{\leftarrow}(\Sigma)$ which demands that, whenever a position of a word belongs to $X_{\gamma}$, then this position is labelled by $h(\gamma)$. Then, the following syntactically restricted $\wRdl(\Sigma, \M)$-sentence defines $\mathcal L$:
$$
\exists (\V \cup \mathcal D(\beta'')). \big[\mathbb B. (\beta'' \wedge \text{Part} \wedge H) \wedge \forall x. \textstyle \big(\bigvee_{\gamma \in \Gamma} \mathbb B. X_{\gamma}(x) \wedge g_1(\gamma), \bigvee_{\gamma \in \Gamma} \mathbb B. X_{\gamma}(x) \wedge g_2(\gamma) \big) \big]
$$
where, for $i \in \{1,2\}$, $g_i$ is the projection of $g$ to the $i$-th coordinate. 

To show the inclusion $\supseteq$, we introduce {\em canonical $\wRdl(\Sigma, \M)$-sentences} which are of the form $\varphi = \exists \V. \forall y. (\bigvee_{i = 1}^k \mathbb B. \beta_i \wedge m_i, \bigvee_{i = 1}^k \mathbb B. \beta_i \wedge m'_i)$ where $\V$ is a set of variables, $m_1, ..., m_k, m_1', ..., m_k' \in M$ and $\beta_1, ..., \beta_k \in \Rdl^{\leftarrow}(\Sigma)$ are such that, for every timed word $w \in \mathbb T \Sigma^+$ and every $w$-assignment $\sigma$, there exists exactly one $i \in \{1, ..., k\}$ such that $(w, \sigma) \models \beta_i$.
We can show that every syntactically-restricted sentence can be transformed into a canonical one. It remains to prove that, for a canonical sentence $\varphi$ as above, $[\![\varphi]\!] \in \mathcal N^{\exists \Rdl^{\leftarrow}}(\Sigma, \M)$. Let $M^1_{\varphi} = \{m_1, ..., m_k\}$ and $M^2_{\varphi} = \{m'_1, ..., m'_k\}$. We put $\Gamma = \Sigma  \times M_{\varphi}^1 \times M_{\varphi}^2$. Let $h: \Gamma \to \Sigma$ be the projection to the first coordinate. Let $g: \Gamma \to M \times M$ be the projection to $M_{\varphi}^1 \times M_{\varphi}^2$. Then we can construct a $\exists \Rdl^{\leftarrow}(\Gamma)$-sentence $\beta$ of the form $\exists \V. \forall y. \beta'$ such that 
$[\![\varphi]\!] = h((\val \circ g) \cap \mathcal L(\beta))$. \qed
\end{proof}

Then, our Theorem \ref{Thm:Rec_Eq_wRdl} follows from Theorem \ref{Thm:LogicNivat}, the Nivat Theorem \ref{Theorem:Nivat} and Wilke's Theorem \ref{Theorem:Wilke}.

\begin{remark}
We can also follow the approach of \cite{DG07} to prove our Theorem \ref{Thm:Rec_Eq_wRdl}. Compared to this way, our new proof technique has the following advantages. The proof idea of \cite{DG07} involves technical details like B\"uchi's encodings of assignments and a bulky logical description of accepting runs of timed automata. In our new proof, these details are taken care of by Wilke's proof for unweighted relative distance logic.
\end{remark}

Let $\Sigma$ be an alphabet, $\M^{\text{sum}}$ the timed valuation monoid as in Example \ref{Ex:TVM}(a) and $\A$ a WTA over $\Sigma$ and $\M$. As it was shown in \cite{ATP01, BFHL01, LBBF01}, $\inf\{||\A||(w) \; | \; w \in \mathbb T \Sigma^+\}$ is computable. This result and our Theorem \ref{Thm:Rec_Eq_wRdl} imply decidability results for weighted relative distance logic.
\begin{itemize}
\item 
Let $\M_0^{\text{sum}}$ be the timed pv-monoid as in Example \ref{Ex:TPVM}. It is decidable, given an alphabet $\Sigma$, a syntactically restricted sentence ${\varphi \in \wRdl(\Sigma, \M^{\text{sum}})}$ with constants from $\mathbb Q$ and a threshold $\theta \in \mathbb Q$, whether there exists $w \in \mathbb T \Sigma^+$ with $[\![\varphi]\!](w) < \theta$.
\item 
Let $\M_0^{\text{avg}}$ be the timed pv-monoid as in Example \ref{Ex:TPVM}. It is decidable, given an alphabet $\Sigma$, a syntactically restricted sentence ${\varphi \in \wRdl(\Sigma, \M^{\text{avg}})}$ with constants from $\mathbb Q$ and a threshold $\theta \in \mathbb Q$, whether there exists $w \in \mathbb T \Sigma^+$ with $\langle w \rangle > 0$ and $[\![\varphi]\!](w) < \theta$.
\end{itemize}

\section{Conclusion and Future Work}

In this paper, we proved a version of Nivat's theorem for weighted timed automata on finite words which states a connection between the quantitative and qualitative behaviors of timed automata. We also considered several applications of this theorem. Using this theorem, we studied the relations between sequential, unambiguous and non-deterministic WTA. We also introduced a weighted version of Wilke's relative distance logic and established a B\"uchi-like result for this logic, i.e., we showed the equivalence between restricted weighted relative distance logic and WTA. Using our Nivat theorem, we deduced this from Wilke's result. 

Because of space constraints, we did not present in this paper the following results. As in \cite{DM12}, for timed pv-monoid with additional properties there are larger fragments of weighted relative-distance logic which are still expressively equivalent to WTA. For the simplicity of presentation, we restricted ourselves to idempotent timed pv-monoids. However, we also obtained a more complicated result for non-idempotent timed pv-monoids. In \cite{Qua10, Qua11}, for weighted relative distance logic over non-idempotent semi\-rings, a strong restriction on the use of a first-order universal quantification was done. Surprisingly, in our result we could avoid this restriction.

Our future work concerns the following directions. The ongoing research should extend the currently obtained results to $\omega$-infinite words. This work should be further extended to the {\em multi-weighted} setting for WTA, e.g., the optimal reward-cost ratio \cite{BBL04, BBL08} or the optimal consumption of several resources where some resources must be restricted \cite{LR05}. A logical characterization of untimed multi-weighted automata was given in \cite{DP13}. It could be also interesting to compare for the weighted and unweighted cases the complexity of translations between logic and automata. We believe that our Nivat theorem will be helpful for this.

%%%%%%%%%%%%%%%%%%%%%%%%%%%%%%%%%%%%%%%%%%%%%%%%%%%%%
%%%%%%%%%%%%%%%%%%%%%%%%%%%%%%%%%%%%%%%%%%%%%%%%%%%%%
\begin{comment}

\end{comment}

%%%%%%%%%%%%%%%%%%%%%%%%%%%%%%%%%%%%%%%%%%%%%%%%%%%%%
%%%%%%%%%%%%%%%%%%%%%%%%%%%%%%%%%%%%%%%%%%%%%%%%%%%%%

\end{document}